\documentclass[11pt]{article}

\usepackage[top=2cm, bottom=2cm, left=2cm, right=2cm, includefoot]{geometry}
\usepackage{authblk}
\usepackage{hyperref}
\usepackage{cleveref}
\usepackage{graphicx}
\usepackage{amsthm}
\usepackage{amsmath}
\usepackage{url}
\usepackage[algoruled,linesnumbered,algo2e,lined,noend]{algorithm2e}
\usepackage{comment}

\usepackage{thmtools}
\usepackage{thm-restate}
\usepackage{color}

\newtheorem{observation}{Observation}
\newtheorem{definition}{Definition}
\newtheorem{proposition}{Proposition}

\newcommand{\mi}[1]{\mathit{#1}}

\newcommand{\para}[1]{{#1}}

\newcommand{\str}[2]{{\mathsf{str}({#1},{#2}})}

\newcommand{\pre}[1]{\mathsf{prev}({#1})}

\newcommand{\calN}{{\cal N}}

\newcommand{\STree}{\mathsf{STree}}
\newcommand{\STrie}{\mathsf{STrie}}
\newcommand{\LST}{\mathsf{LST}}
\newcommand{\PSTree}{\mathsf{PSTree}}
\newcommand{\PSTrie}{\mathsf{PSTrie}}
\newcommand{\Recode}{\mathsf{Re}}

\newcommand{\Shrink}{\mathsf{sl}}
\newcommand{\Slink}{\mathsf{SL}}
\newcommand{\Flink}{\mathsf{FL}}

\newcommand{\PLST}{\mathsf{PLST}}

\newcommand{\Child}{\mathsf{child}}

\SetKwProg{Fn}{Function}{}{end}

\newcommand{\reencode}[1]{\langle #1 \rangle}
\newcommand{\PSub}{\mathsf{PrevSub}}

\newcommand{\vtx}[1]{#1}

\newtheorem{theorem}{Theorem}
\newtheorem{lemma}{Lemma}

\begin{document}

\title{An Extension of Linear-size Suffix Tries for Parameterized Strings}
\author[1]{Katsuhito Nakashima}
\author[1]{Diptarama Hendrian}
\author[1]{Ryo Yoshinaka}
\author[1]{Ayumi Shinohara}

\affil[1]{Graduate School of Information Sciences, Tohoku University, Japan}

\date{}

\maketitle            

\begin{abstract}
In this paper, we propose a new indexing structure for parameterized strings which we call \emph{PLSTs}, by generalizing linear-size suffix tries for ordinary strings.
Two parameterized strings are said to match if there is a bijection on the symbol set that makes the two coincide. 
PLSTs are applicable to the parameterized pattern matching problem,
 which is to decide whether the input parameterized text has a substring that matches the input parameterized pattern.
The size of PLSTs is linear in the text size, 
with which our algorithm solves the parameterized pattern matching problem in linear time in the pattern size.
PLSTs can be seen as a compacted version of parameterized suffix tries and a combination of linear-size suffix tries and parameterized suffix trees.
We experimentally show that PLSTs are more space efficient than parameterized suffix trees for highly repetitive strings. 


\end{abstract}

\section{Introduction}
The pattern matching problem is to check whether a pattern string occurs in a text string or not.
To efficiently solve the pattern matching problem, a numerous number of text indexing structures have been proposed.
Suffix trees 
 are most widely used data structures
and provide many applications including several variants of pattern matching problems~\cite{JEWELS,GUS}.
They can be seen as a compacted type of suffix tries, where two branching nodes that have no other branching nodes between them in a suffix trie are directly connected in the suffix tree.
The new edges have a reference to an interval of the text so that the original path label of the suffix trie can be recovered.
Recently, Crochemore et al.~\cite{crochemore2016linear} proposed a new indexing structure, called a \emph{linear-size suffix trie (LST)}, which is another compacted variant of a suffix trie.
An LST replaces paths consisting only of non-branching nodes by edges like a suffix tree, but the original path labels are recovered by referring to other edge labels in the LST itself unlike suffix trees.
LSTs use less memory space than suffix trees for indexing the same highly repetitive strings~\cite{takagi2017linear}.
LSTs may be used as an alternative of suffix trees for various applications, like computing the longest common substrings, not limited to the pattern matching problem.

On the other hand, different types of pattern matching have been proposed and intensively studied.
The variant this paper is concerned with is the \emph{parameterized pattern matching problem}, introduced by Baker~\cite{PMA}.
Considering two disjoint sets of symbols $\Sigma$ and $\Pi$,
 we call a string over $\Sigma \cup \Pi$ a {\em parameterized string (p-string)}.
In the parameterized pattern matching problem,
given p-strings $\para{T}$ and $\para{P}$,
we must check whether substrings of $\para{T}$
that can be transformed into $\para{P}$ by applying a one-to-one function
that renames symbols in $\Pi$.
The parameterized pattern matching is motivated
by applying to the software maintenance~\cite{PM,PMA},
the plagiarism detection~\cite{EPM},
the analysis of gene structure~\cite{SST},
and so on.
Similarly to the basic string matching problem,
several indexing structures that support the parameterized pattern matching
have been proposed,
such as parameterized suffix trees~\cite{PMA}, structural suffix trees~\cite{SST}, parameterized suffix arrays~\cite{PSC2008-8,I2009}, and parameterized position heaps~\cite{diptarama2017position,fujisato2018right}.

In this paper, we propose a new indexing structure for p-strings, which we call \emph{PLST}.
A PLST is a tree structure that combines a linear-size suffix trie and a parameterized suffix tree for prev-encoded~\cite{PMA} suffixes of a p-string.
We show that the size of a PLST is $O(n)$ and 
 give an algorithm for the parameterized pattern matching problem for given a pattern and a PLST,
to find the occurrences of a pattern in the text,
that runs in $O(m)$ time, 
where $n$ is the length of the text and $m$ is the length of the pattern.
Furthermore, we experimentally show that PLSTs are more space efficient than parameterized suffix trees for highly repetitive strings such as Fibonacci strings.




\section{Preliminaries}\label{sec:Prelim}

\subsection{Basic definitions and notation}
We denote the set of all non-negative integers by $\calN$.
Let $\Delta$ be an alphabet.
For a string $\para{w}=\para{xyz} \in \Delta^*$, $\para{x}$, $\para{y}$, and $\para{z}$ are called 
\emph{prefix}, \emph{substring}, and \emph{suffix} of $\para{w}$, respectively.
The length of $\para{w}$ is denoted by $|\para{w}|$ and
 the $i$-th symbol of $\para{w}$ is denoted by $\para{w}[i]$ for $1 \leq i \leq |\para{w}|$.
The substring of $\para{w}$ that begins at position $i$ and ends at position $j$ 
is denoted by $\para{w}[i:j]$ for $1 \leq i \leq j \leq |\para{w}|$.
For convenience,
we abbreviate $\para{w}[1:i]$ to $\para{w}[:i]$ and $\para{w}[i:|w|]$ to $\para{w}[i:]$ for $1 \leq i \leq |\para{w}|$.
The empty string is denoted by $\varepsilon$, that is $|\varepsilon|=0$.
Moreover, let $\para{w}[i:j] = \varepsilon$ if $i > j$.
For a string $u$ and an extension $uv$, we write $\str{u}{uv}=v$.

Throughout this paper, we fix two alphabets $\Sigma$ and $\Pi$.
We call elements of $\Sigma$ \emph{constant} symbols
and those of $\Pi$ \emph{parameter} symbols.
An element of $\Sigma^{*}$ is called a \emph{constant string}
and that of $(\Sigma \cup \Pi)^{*}$ is called a \emph{parameterized string},
or \emph{p-string} for short.
We assume that the size of $\Sigma$ and $\Pi$ are constant.

Given two p-strings $w_1$ and $w_2$ of length $n$,  
$w_1$ and $w_2$ are a \emph{parameterized match} \emph{(p-match)}, denoted by $w_1 \approx w_2$,
if there is a bijection $f$ on $\Sigma \cup \Pi$ such that $f(a) = a$ for any $a \in \Sigma$ and $f(w_1[i]) = w_2[i]$ for all $1 \leq i \leq n$~\cite{PMA}.
We can determine whether $w_1 \approx w_2$ or not by using an encoding called \emph{prev-encoding} defined as follows.

\begin{definition}[Prev-encoding~\cite{PMA}]
For a p-string $w$ of length $n$ over $\Sigma \cup \Pi$,
the \emph{prev-encoding} for $w$, denoted by $\pre{w}$,
is defined to be a string over $\Sigma \cup \calN$  of length $n$ such that for each $1 \leq i \leq n$,

\begin{align*}
	\pre{w}[i] = \begin{cases}
    \para{w}[i] & \text{if }\para{w}[i] \in \Sigma , \\
    0           & \text{if }\para{w}[i] \in \Pi \text{ and } \para{w}[i] \neq \para{w}[j] \text{ for } 1 \le j < i,\\
    i-k  		& \text{if }\para{w}[i] \in \Pi \text{ and } k = \max\{j \mid \para{w}[j]=\para{w}[i] \text{ and } 1 \le j < i\}.
  \end{cases}
\end{align*}

We call strings over $\Sigma \cup \calN$ \emph{pv-strings}.
\end{definition}

For any p-strings $w_1$ and $w_2$,  $w_1 \approx w_2$ if and only if $\pre{w_1}=\pre{w_2}$.
For example, given $\Sigma = \{{\tt a, b}\}$ and $\Pi = \{{\tt u, v, x, y}\}$,
$s_1 = {\tt uvvvauuvb}$ and $s_2 = {\tt xyyyaxxyb}$ are p-matches 
by $f$ such that $f(\mathtt{u})=\mathtt{x}$ and $f(\mathtt{v})=\mathtt{y}$,
where $\pre{s_1} = \pre{s_2} = 0011{\tt a}514{\tt b}$.

We define \emph{parameterized pattern matching} as follows.
\begin{definition}[Parameterized pattern matching~\cite{PMA}]
Given two p-strings, text $T$ and pattern $P$, decide whether $T$ has a substring that p-matches $P$.
\end{definition}
For example, considering a text $T={\tt auvaubuavbv}$ and a pattern $P = {\tt xayby}$ over $\Sigma = \{{\tt a, b}\}$ and $\Pi = \{{\tt u, v, x, y}\}$,
$T$ has two substrings $T[3:7]={\tt vaubu}$ and $T[7:11] = {\tt uavbv}$ that p-match $P$.

Throughout this paper, we assume that a text $T$ ends with a sentinel symbol $\texttt{\$} \in \Sigma$, which occurs nowhere else in $T$.

\subsection{Suffix tries, suffix trees, and linear-size suffix tries} \label{ST}
This subsection briefly reviews tree structures for indexing all the substrings of a constant string $T \in \Sigma^*$.

The suffix trie $\STrie(T)$ is a tree with nodes corresponding to all the substrings of $T$.
Figure~\ref{fig:Strie_LST}~(a) shows an example of a suffix trie. 
Throughout this paper, we identify a node with its corresponding string for explanatory convenience.
Note that each node does not explicitly remember its corresponding string.
For each nonempty substring $ua$ of $T$ where $a \in \Sigma$, we have an edge from $\vtx{u}$ to $\vtx{ua}$ labeled with $a$.
Then by reading the labels on the path from the root to a node $u$, one can obtain the string $u$ the node corresponds.
Then the path label from the node $\vtx{u}$ to a descendant $\vtx{uv}$ is $\str{u}{uv}=v$ for $u,v \in \Sigma^*$.
Since there are $\Theta(|T|^2)$ substrings of $T$, the size of $\STrie(T)$ is $\Theta(|T|^2)$. 

The suffix tree $\STree(T)$ is a tree obtained from $\STrie(T)$ by removing all non-branching internal nodes and replacing each path with no branching nodes by a single edge whose label refers to a corresponding interval of the text $T$.
That is, the label on the edge $(\vtx{u},\vtx{v})$ is a pair $(i,j)$ such that $T[i:j] = \str{u}{v}$.
Since there are at most $O(|T|)$ branching nodes,  the size of $\STree(T)$ is $\Theta(|T|)$. 

An important auxiliary map on nodes is called \emph{suffix links}, denoted by $\Slink$, which is defined by $\Slink(\vtx{aw}) = \vtx{w}$ for each node $aw$ with $a \in \Sigma$ and $w \in \Sigma^*$.

\begin{figure}[t]
	\centering
	\begin{minipage}[t]{0.48\hsize}
		\centering
		\includegraphics[scale=0.115]{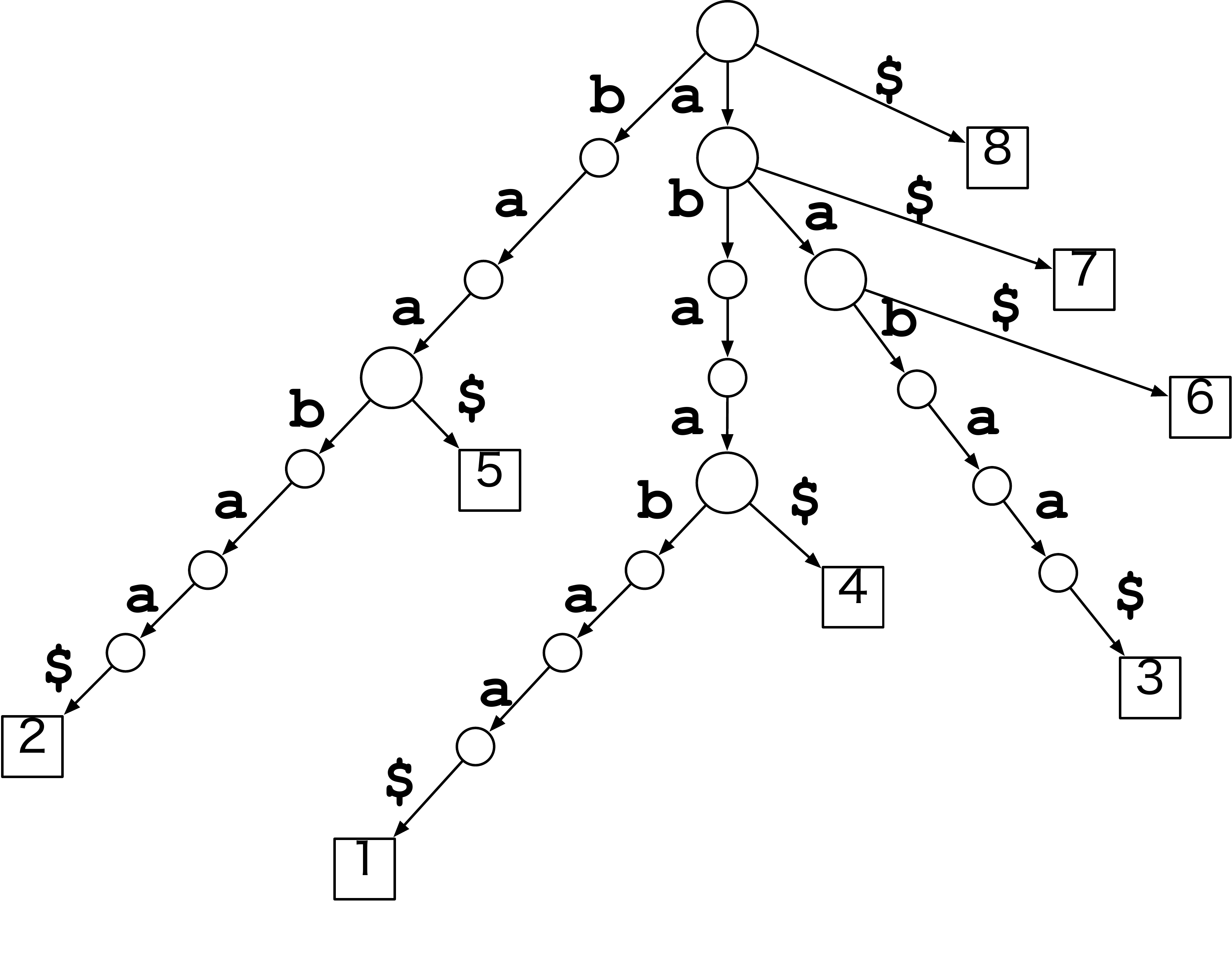}\\
		\ \ \ \scriptsize{(a)}
	\end{minipage}
	\begin{minipage}[t]{0.48\hsize}
		\centering
		\includegraphics[scale=0.115]{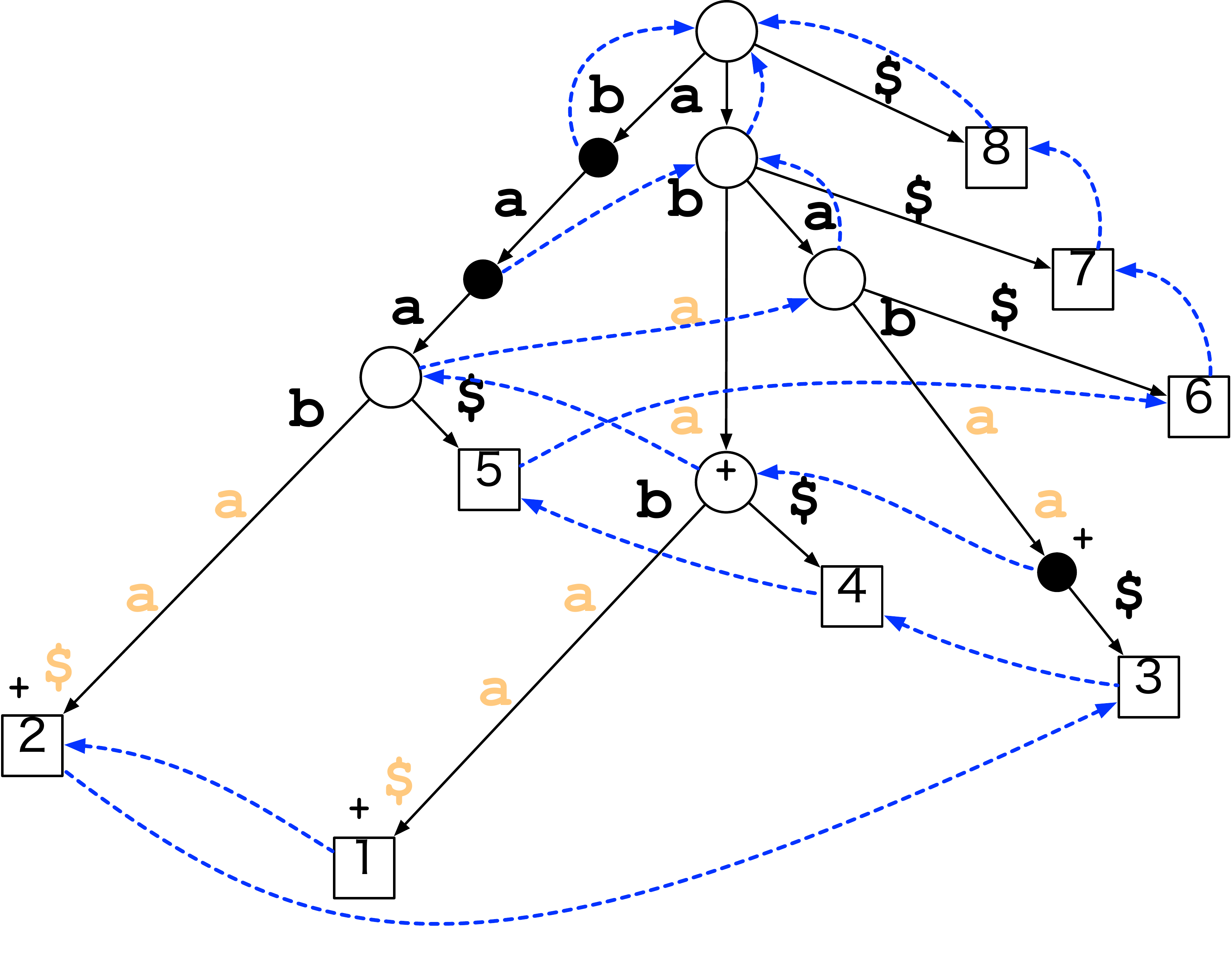}\\
		\ \ \ \scriptsize{(b)}
	\end{minipage}
	\caption{
		(a) The suffix trie for $T={\tt abaabaa}{\texttt \$}$.
		(b) The LST for $T$.
		Solid and broken arrows represent the edges and suffix links, respectively. 
		The LST keeps the first symbol (black) on each edge, while the succeeding symbols (orange) are discarded. 
		Big white and small black circles represent nodes of Type~1 and Type~2, respectively. 
		The `$+$' signs represent the 1-bit flag. 
		If a node $v$ has `$+$' sign, the edge $(u, v)$ has a path label of length greater than 1 in $\STrie(T)$ where $u$ is the parent node of $v$ in $\LST(T)$.
	}
	\label{fig:Strie_LST}
\end{figure}

The \emph{linear-size suffix trie (LST)}~\cite{crochemore2016linear} $\LST(T)$ of a string $T$ is another compact variant of a suffix trie (see Figure~\ref{fig:Strie_LST}~(b)). 
An LST suppresses (most) non-branching nodes and replaces paths with edges like a suffix tree, but the labels of those new edges do not refer to intervals of the input text.
Each edge $(\vtx{u},\vtx{v})$ retains only the first symbol $\str{u}{v}[1]$ of the original path label $\str{u}{v}$.
To recover the original label $\str{u}{v}$, we refer to another edge or a path in the LST itself following a suffix link,
using the fact that $\str{u}{v} = \str{\Slink(u)}{\Slink(v)}$. 
The reference will be recursive, but eventually one can regain the original path label by collecting those retained symbols.
For this sake, $\LST(T)$ keeps some non-branching internal nodes from $\STrie(T)$ and thus it may have more nodes than $\STree(T)$, but still the size is linear in $|T|$.
The nodes of $\LST(T)$ consist of those of $\STree(T)$ and non-branching node whose suffix links point at a branching node.
We call the former \emph{Type~1} and the latter \emph{Type~2}.
Each edge $(\vtx{u},\vtx{v})$ has a 1-bit flag that tells whether $|v|-|u|=1$.
If it is the case, one knows the complete label $\str{u}{v}=\str{u}{v}[1]$.
Otherwise, one needs to follow the suffix link to regain the other symbols.
An LST uses suffix links to regain the original path label in the suffix trie.
If we had only Type~1 nodes, for some edge $(\vtx{u},\vtx{v})$, there may be a branching node between $\Slink(\vtx{u})$ and $\Slink(\vtx{v})$, which makes it difficult to regain the original path label.
Having Type~2 nodes, there is no branching node between $\Slink(\vtx{u})$ and $\Slink(\vtx{v})$ for every edge $(\vtx{u}, \vtx{v})$.  Then it is enough to go straight down from $\Slink(\vtx{u})$ to regain the original path label.

\subsection{Parameterized suffix tries and parameterized suffix trees}\label{PST}

\begin{figure}[t]
	\centering
	\begin{minipage}[t]{0.45\hsize}
		\centering
		\includegraphics[scale=0.10]{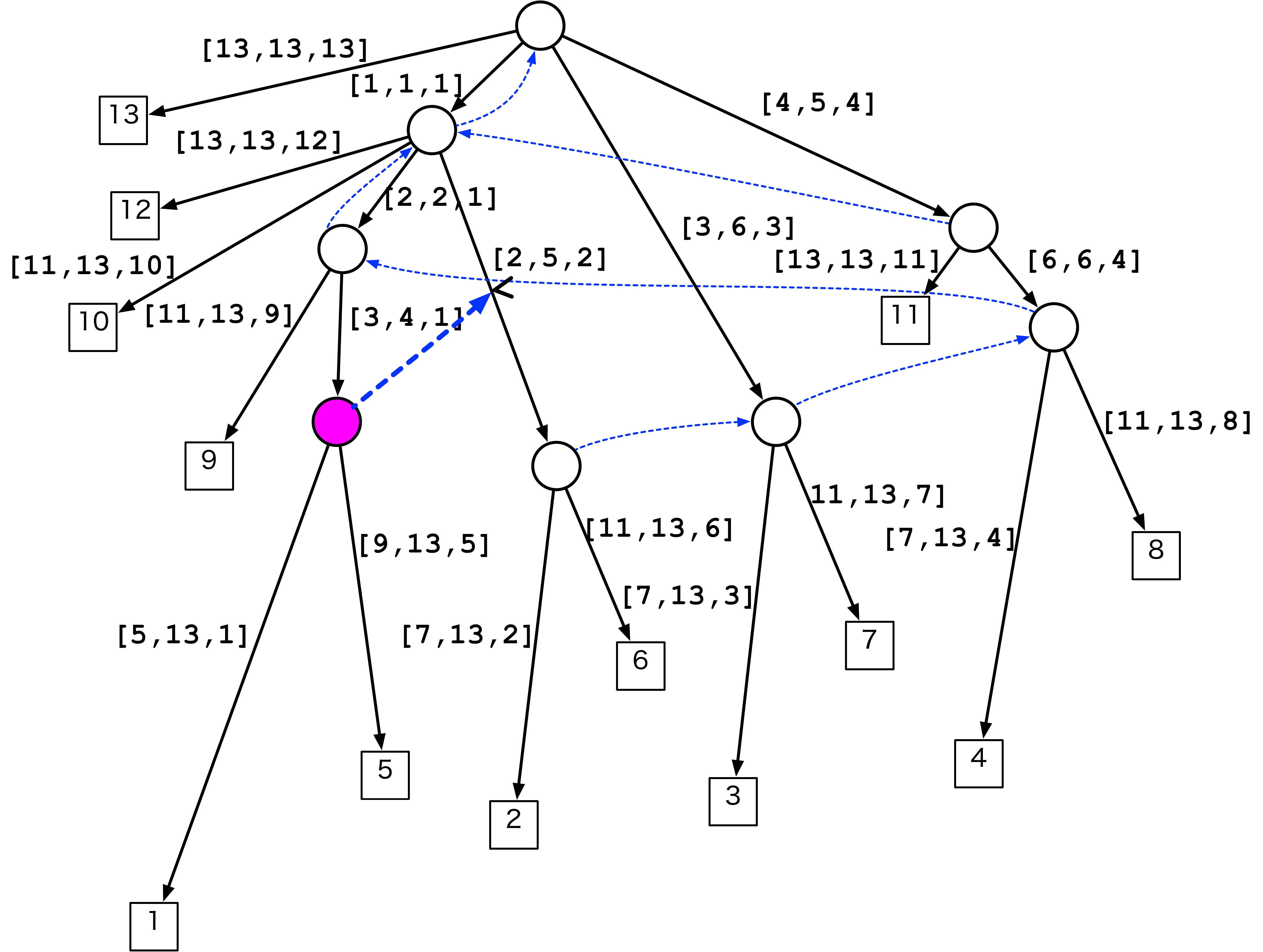}\\
		\includegraphics[scale=0.25]{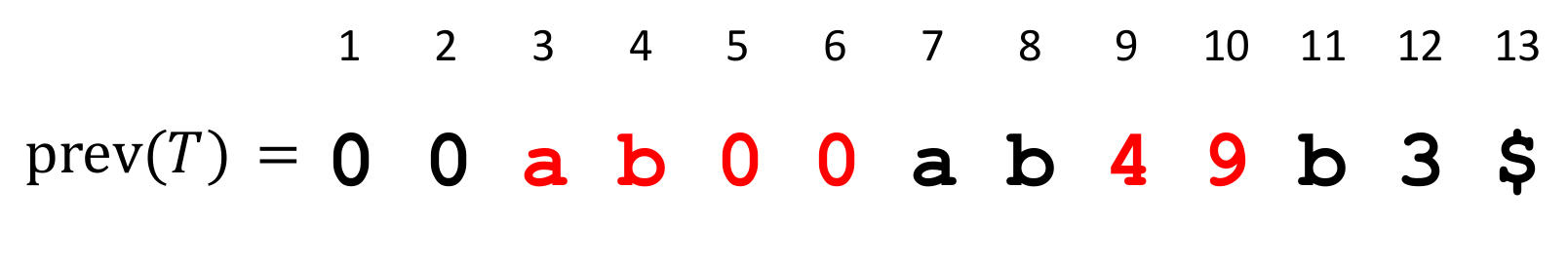}\\
		\ \ \ \scriptsize{(a)}
	\end{minipage}
	\begin{minipage}[t]{0.45\hsize}
		\centering
		\includegraphics[scale=0.10]{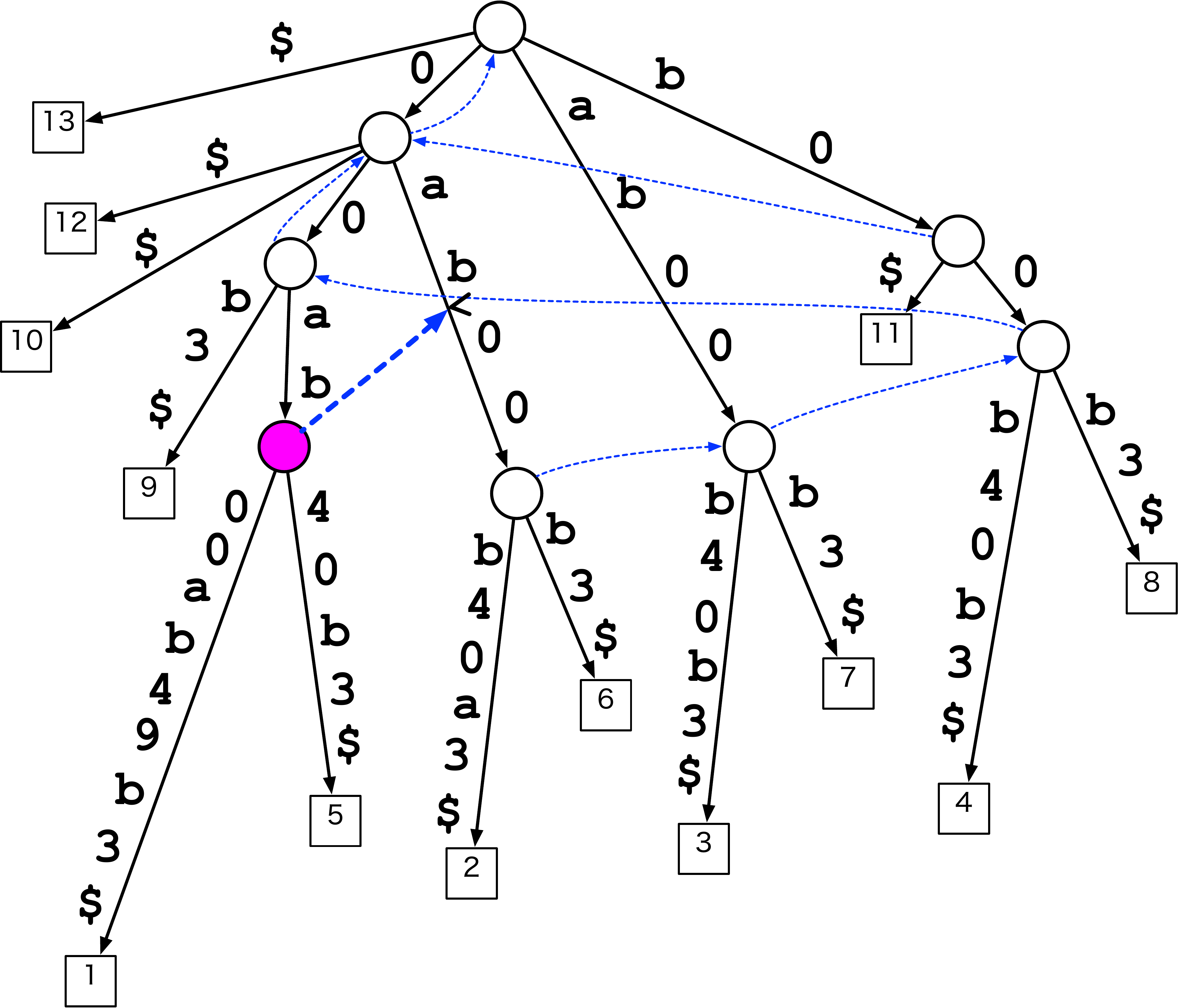}\\		
 		\ \ \ \scriptsize{(b)}
	\end{minipage}
	\caption{
		(a) The parameterized suffix tree $\PSTree(T)$ for $T={\tt xyabzwabzxbz}{\texttt \$}$ where $\Sigma=\{ {\tt a, b, }{\texttt   \$}\}$ and $\Pi=\{{\tt x,y,z,w}\}$.
		(b) The referenced substrings are shown on edges.
		Broken blue arrows denote suffix links. Some suffix links do not point to a branching node shown with a bold broken arrow. 
	}
	\label{fig:p-string_PST}
\end{figure}

For a p-string $T \in (\Sigma \cup \Pi)^*$, a \emph{prev-encoded substring (pv-substring)} of $T$ is the {prev-encoding} $\pre{w}$ of a substring $w$ of $T$. 
The set of pv-substrings of $T$ is denoted by $\PSub(T)$.

A parameterized suffix trie of $T$, denoted by $\PSTrie(T)$, is the trie that represents all the pv-substrings of $T$. The size of $\PSTrie(T)$ is $\Theta(|T|^2)$. 

For a pv-string $u \in (\Sigma \cup \calN)^*$,
the $k$-\emph{re-encoding} for $u$, denoted by $\reencode{u}_k$,
is defined to be the pv-string of length $|u|$ such that for each $1 \leq i \leq |u|$,
\[
  \reencode{u}_k[i] = \begin{cases}
    0           & \text{if }\para{u}[i] \in \calN \text{ and } \para{u}[i] \geq i-k+1 , \\
    \para{u}[i] & \text{otherwise.}
  \end{cases}
\]
When $k = 1$, we omit $k$.
We then have $\reencode{\pre{w}[i:j]}=\pre{w[i:j]}$ for any p-string $w \in (\Sigma \cup \Pi)^*$ and $i,j \le |w|$.

Usually suffix links are defined on nodes of suffix trees, but it is convenient to have ``implicit suffix links'' on all nodes except the root of $\STrie(T)$, i.e., all the nonempty substrings of $T$, as well.
For a nonempty pv-string $u \in (\Sigma \cup \calN)^+$, let $\Shrink(u)$ denote the re-encoding $\reencode{u[2:]}$ of the string obtained by deleting the first symbol.
This operation on strings will define real suffix links in indexing structures for parameterized strings based on parameterized suffix tries.
Differently from constant strings, $u \in \PSub(T)$ does not necessarily imply $u[2:] \in \PSub(T)$. 
What we actually have is $\Shrink(u)=\reencode{u[2:]} \in \PSub(T)$.

A \emph{parameterized suffix tree (p-suffix tree)}~\cite{PMA} of $T$, denoted by $\PSTree(T)$, is a compacted variant of the parameterized suffix trie.
Figure~\ref{fig:p-string_PST} shows an example of a p-suffix tree. 
Like the suffix tree for a constant string over $\Sigma$, $\PSTree(T)$ is obtained from $\PSTrie(T)$ by removing non-branching internal nodes and giving each edge as a label a reference to some interval of the prev-encoded text $\pre{T}$.
The reference is represented by a triple $(i, j, k)$ of a text start position, end position, and suffix number, which refers to the pv-string $\reencode{\pre{T}[k:]}[i:j]$. 

\section{PLSTs}\label{sec:PLST}

\newcommand{\ta}{\mathtt{a}}
	\newcommand{\tx}{\mathtt{x}}
	\newcommand{\ty}{\mathtt{y}}
	\newcommand{\tz}{\mathtt{z}}
	\newcommand{\z}{\mathrm{0}}
	\newcommand{\s}{\mathrm{6}}

We now introduce our indexing tree structures for p-strings, which we call \emph{PLSTs}, based on LSTs and p-suffix trees reviewed in Sections~\ref{ST} and \ref{PST}.
There are two difficulties in extending LSTs to deal with p-strings.
Figure~\ref{fig:PLST}(a) shows the LST-like structure obtained from $\PSTrie(T)$ in the same way as $\LST(T)$ is obtained from $\STrie(T)$.
We want to know $\str{u}{v}$ for an edge $(u, v)$ by ``reduction by suffix links'', but
\begin{enumerate}
	\item it is \emph{not} necessarily that $\str{u}{v} = \str{\Shrink(u)}{\Shrink(v)}$,
	\item there can be a branching node $u$ of $\PSTrie(T)$ such that $\Shrink(u)$ is not branching.
\end{enumerate}
An example edge $(u,v)$ exhibiting the first difficulty consists of $u=00$ and $v=00\mathtt{b}3\texttt{\$}$, where $\str{u}{v} = \texttt{b}3\texttt{\$}$ but $\str{\Shrink(u)}{\Shrink(v)} = \texttt{b}0\texttt{\$}$.
This is caused by the fact that $\Shrink(u) = \reencode{u[2:]}$ rather than $\Shrink(u) = u[2:]$.
Then, the path label $\str{\Shrink(u)}{\Shrink(v)}$ referenced by the suffix link may not give exactly what we want.
We solve this problem by giving the node $v$ a ``re-encoding sign'' with which one can recover $\str{u}{v}$ from $\str{\Shrink(u)}{\Shrink(v)}$.
An example node for the second case is $\vtx{00\mathtt{ab}}$.
This is a branching node but $\reencode{\Shrink(\vtx{00\mathtt{ab}})}=\vtx{0\mathtt{ab}}$ does not appear as a node.
To handle this case, we simply refer to the corresponding interval of the original text $T$ by keeping the necessary subsequence,
where, as we will observe in experiments, the necessary subsequence is tend to be rather small.
Our proposed structure PLST is shown in Figure~\ref{fig:PLST}(b). 
In what follows we explain PLSTs.

\subsection{Definition and properties of PLSTs}

\begin{figure}[t]
	\centering
	\begin{minipage}[t]{0.426\hsize}
		\centering
		\includegraphics[scale=0.12]{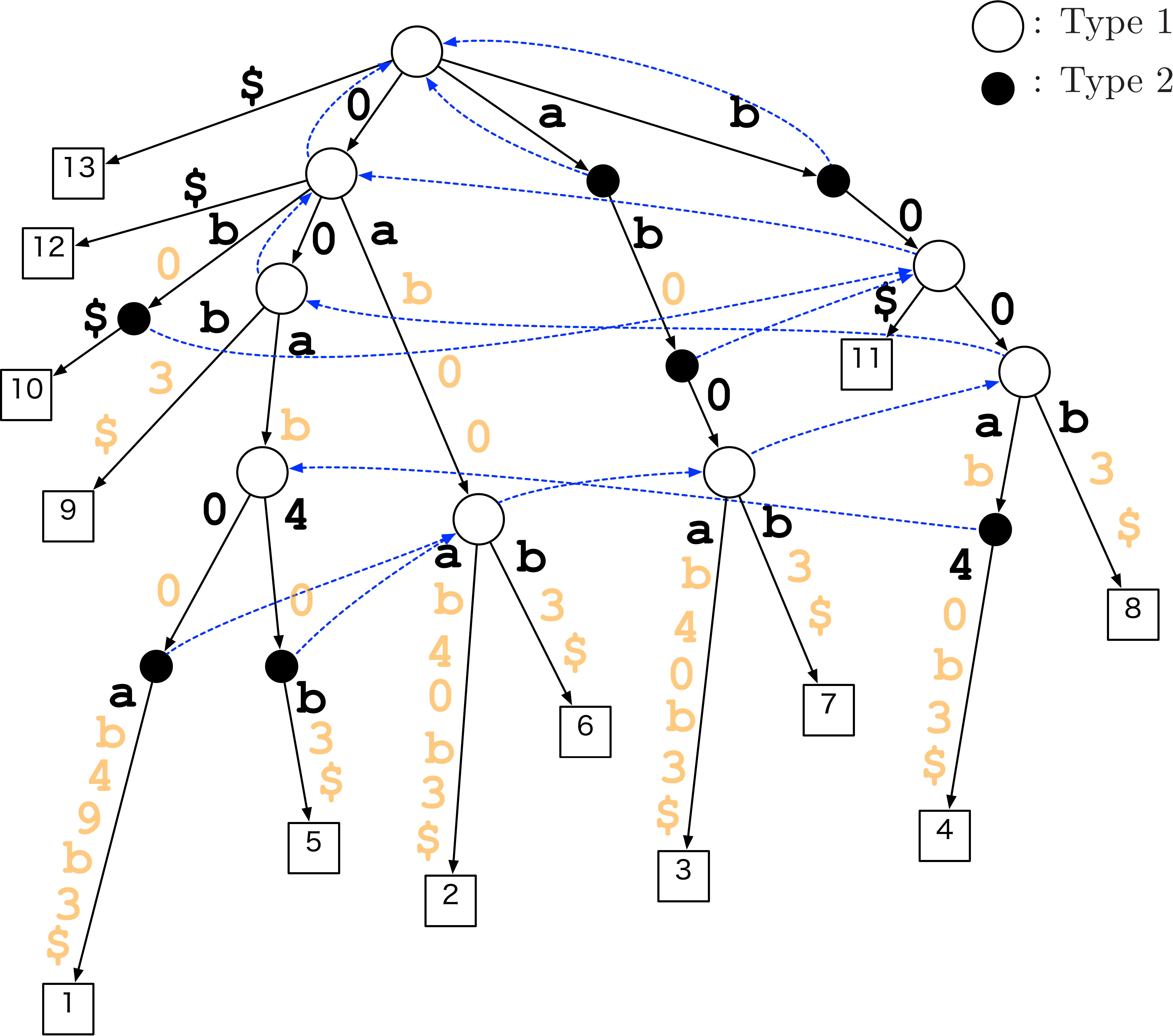}\\
		\ \ \ \scriptsize{(a)}
	\end{minipage}
	\begin{minipage}[t]{0.566\hsize}
		\centering
		\includegraphics[scale=0.12]{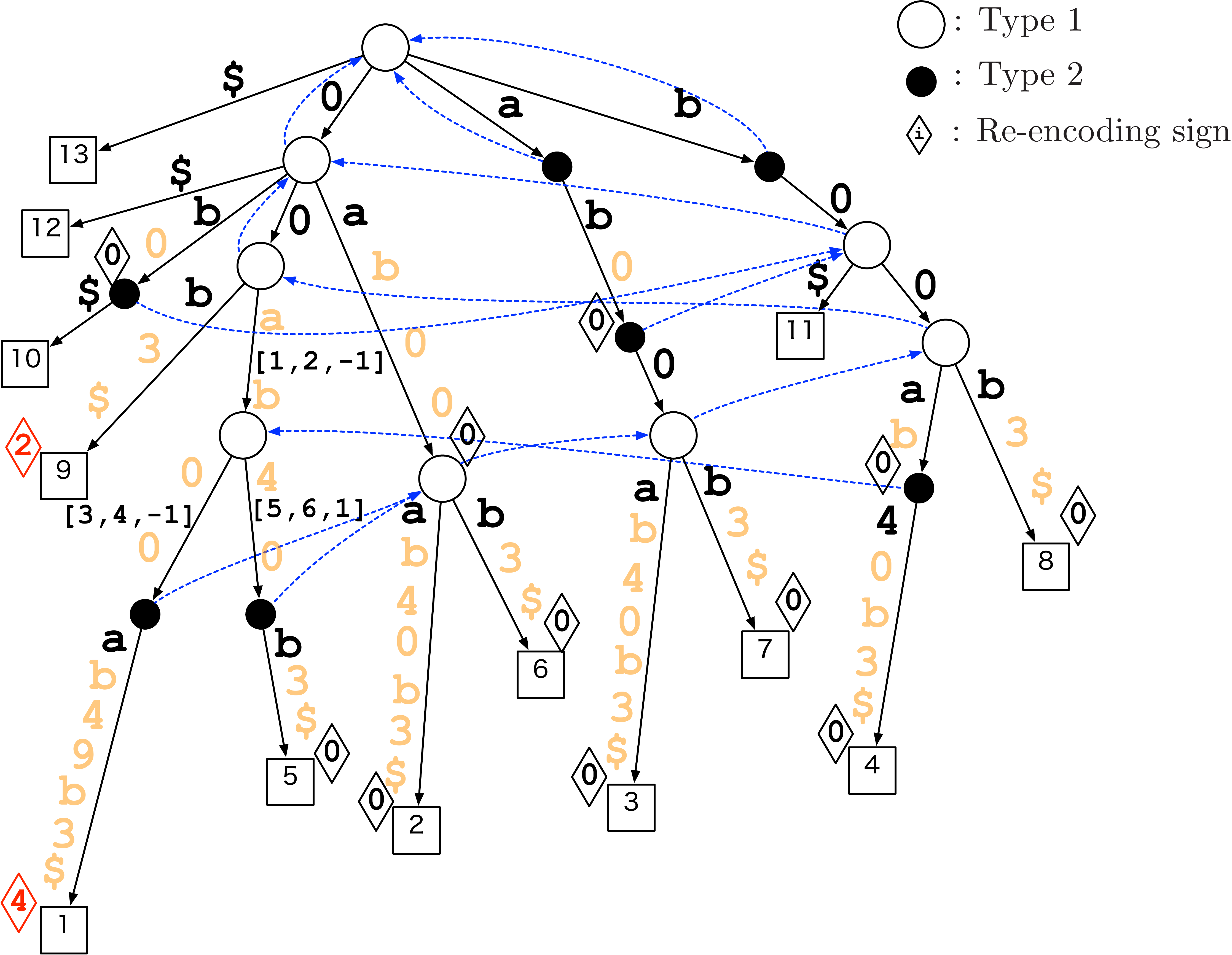}\\
		\includegraphics[scale=0.3]{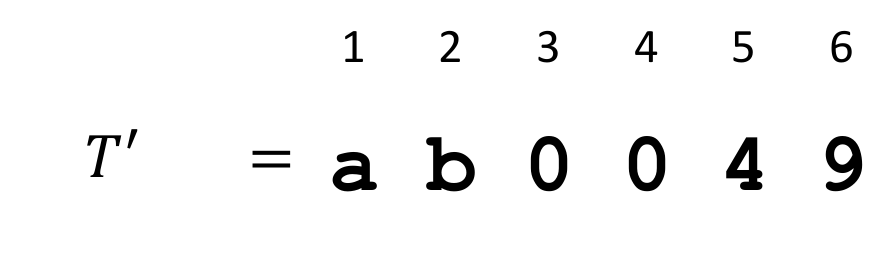}\\
		\ \ \ \scriptsize{(b)}
	\end{minipage}
	\caption{
		(a) LST-like structure for $\PSub(T)$ and (b) the PLST $\PLST(T)$ for $T={\tt xyabzwabzxbz}{\texttt \$}$ 
		where $\Sigma=\{ {\tt a, b,}{\texttt \$}\}$ and $\Pi=\{{\tt x, y, z, w}\}$. 
		The triples of integers are reference to the text.
 White and black circles represent nodes of Type~1 and 2, respectively. 
 The numbers in rhombus represent re-encoding signs.
 The PLST keeps only the first symbol (black) on each edge, while the succeeding symbols (orange) are discarded. 
 The reference text $T' = {\tt ab0049}$ is shorter than the original $T$.
	}
	\label{fig:PLST}
\end{figure}

Let $U = \PSub(T)$ be the set of nodes of $\PSTrie(T)$.
The set $V$ of nodes of the PLST $\PLST(T)$ for $T$ is a subset of $U$, which is partitioned as $V=V_1 \cup V_2 \subseteq U$.
Nodes in $V_i$ are called \emph{Type~$i$} for $i=1,2$.
The definition of Type~1 and Type~2 nodes follows the one for original LSTs~\cite{crochemore2016linear}.
\begin{enumerate}
	\item A node $u \in U$ is Type~1 if $u$ is a leaf or a branching node in $\PSTrie(T)$.
	\item A node $u \in U$ is Type~2 if $u \notin V_1$ and $\Shrink(u) \in V_1$.
\end{enumerate}
Edges of $\PLST(T)$ are trivially determined: we have $(u, uv) \in V \times V$ as an edge if and only if $v \neq \varepsilon$ and there is no proper nonempty prefix $v'$ of $v$ such that $uv' \in V$.
We will show in Section~\ref{sec:size} that $|V| \in O(|T|)$.
We say that $u \in V$ is \emph{good} if $\Shrink(u) \in V$,
and $u \in V$ is \emph{bad} otherwise.
Note that any $u \in V_2$ is good by the definition of $V_2$, and that the root $\varepsilon$ is bad.

To obtain $\str{u}{v}$ for an edge $(u,v)$, if $|v|-|u|=1$, we simply read the edge label $v[1]$ like an LST.
Otherwise, if both $u$ and $v$ are good, we basically use the technique of ``reduction by suffix links''.
An important observation is that the equation $\str{u}{v} = \str{\Shrink(u)}{\Shrink(v)}$, which was a key property to regain the original label in (non-parameterized) LSTs, does not necessarily hold for PLSTs.
Figure~\ref{fig:sign} shows an example,
where $\str{u}{v} = {\tt cb}40 \neq  {\tt cb}00 = \str{\Shrink(u)}{\Shrink(v)}$;
the third symbol $4$ in $\str{u}{v}$ is re-encoded to $0$ in $\str{\Shrink(u)}{\Shrink(v)}$, because the first symbol $v[1]=0$ of $v$, that is referenced by the symbol $4$, is cut out in $\Shrink(v)$.
Fortunately, the possible difference between $\str{u}{v}$ and $\str{\Shrink(u)}{\Shrink(v)}$ is limited.
\begin{observation}\label{obs:reencode}
	Any prev-encoded substring $v$ of text $T$ has at most one position $i$ such that $v[i] = i - 1$. For such a position $i$, we have $\Shrink(v)[i-1] = 0$ and for any $j \in \{2,\dots,|v|\}\setminus\{i\}$, $\Shrink(v)[j-1] = v[j]$. 
	Thus, such a position is unique in $\str{u}{v}$ for each edge $(u, v)$ in $\PLST(T)$.
\end{observation}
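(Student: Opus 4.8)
The plan is to derive all three assertions from a single elementary fact about prev-encodings, namely that their numeric entries are bounded by their position. First I would record this bound: since $v = \pre{w}$ for some substring $w$ of $T$, the definition of prev-encoding gives, for every position $j$ with $v[j] \in \calN$, either $v[j] = 0$ (first occurrence of the parameter $w[j]$) or $v[j] = j - k$ with $1 \le k < j$, whence $0 \le v[j] \le j-1$. In particular $v[j] = j-1$ is the maximal numeric value possible at position $j$, and for $j \ge 2$ it holds exactly when the most recent earlier occurrence of $w[j]$ is position $1$ (i.e.\ $k = 1$). Throughout I restrict attention to positions $j \ge 2$; the position $j = 1$ trivially satisfies $v[1] = 0 = 1-1$ whenever $w[1] \in \Pi$, but it marks a first occurrence rather than a back reference and $\Shrink(v)[0]$ is undefined, so it is excluded.

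For the uniqueness claim I would argue by contradiction. Suppose two positions $2 \le i_1 < i_2$ satisfy $v[i_1] = i_1 - 1$ and $v[i_2] = i_2 - 1$. By the characterization above both $w[i_1]$ and $w[i_2]$ have their most recent earlier occurrence at position $1$, so $w[i_1] = w[i_2] = w[1]$. But then the occurrence of $w[i_2]$ at position $i_1$ lies strictly between positions $1$ and $i_2$, so the most recent earlier occurrence of $w[i_2]$ is at $i_1 \ge 2 > 1$, forcing $v[i_2] = i_2 - i_1 < i_2 - 1$, a contradiction. Hence at most one position $i$ carries $v[i] = i-1$.

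For the second sentence I would unfold the definition of $\Shrink$. Since $\Shrink(v) = \reencode{v[2:]}$ and the $1$-re-encoding maps the numeric entry sitting at relative position $\ell$ of $v[2:]$ to $0$ exactly when its value is at least $\ell$, and since $v[j]$ occupies relative position $j-1$ in $v[2:]$, we get $\Shrink(v)[j-1] = 0$ iff $v[j] \in \calN$ and $v[j] \ge j-1$, and $\Shrink(v)[j-1] = v[j]$ otherwise. Combining with the bound $v[j] \le j-1$ from the first paragraph, the condition $v[j] \in \calN$ and $v[j] \ge j-1$ is equivalent to $v[j] = j-1$, i.e.\ to $j = i$. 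This yields simultaneously $\Shrink(v)[i-1] = 0$ and $\Shrink(v)[j-1] = v[j]$ for all $j \in \{2,\dots,|v|\} \setminus \{i\}$. Finally, the last sentence is immediate: for an edge $(u,v)$ of $\PLST(T)$ the label $\str{u}{v}$ is the contiguous block $v[\,|u|+1 : |v|\,]$ of $v$, so the at-most-one special position of $v$ meets this block in at most one place.

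The content here is a direct unfolding of definitions; the only places demanding care are the index bookkeeping among $v$, $v[2:]$, and the re-encoding threshold, and the edge case $j = 1$, which must be explicitly set aside so that the uniqueness statement reads correctly. It is literally false if $j=1$ is admitted, as the prev-encoding $v = 0\,1$ of $\mathtt{xx}$ shows, where both $v[1]=0=1-1$ and $v[2]=1=2-1$. I expect the index translation to be the only genuine, if minor, source of error.
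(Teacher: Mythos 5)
Your proof is correct and takes the same route the paper does implicitly: Observation~\ref{obs:reencode} is stated without proof as immediate from the definitions, and your argument --- the bound $0 \le v[j] \le j-1$ on numeric entries of a prev-encoding, the reading of $v[i]=i-1$ as a back-reference to position $1$ (so that two such positions $i_1 < i_2$ would force $v[i_2] \le i_2 - i_1 < i_2 - 1$), and the index shift in $\Shrink(v)=\reencode{v[2:]}$ --- is precisely the unfolding the authors intend. Your exclusion of $i=1$ (witnessed by $\pre{\mathtt{xx}} = 01$, which violates the literal statement) flags a genuine but harmless imprecision in the paper's wording: re-encoding signs are only ever consulted (Lemma~\ref{lem:re-encode}, Algorithm~\ref{alg:simplematch}) on edges $(u,v)$ whose endpoints are both good, and since the root is bad this forces $|u| \ge 1$, so only positions $i > |u| \ge 1$ are relevant, which is exactly your restriction to $i \ge 2$.
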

For each edge $(u, v)$, we associate an integer named  \emph{re-encoding sign}, so that we can regain $\str{u}{v}$ from $\str{\Shrink(u)}{\Shrink(v)}$ as follows.

\begin{definition}[Re-encoding sign]\label{def:reencoding}
	 For each node $v \in V$, let $u$ be the parent of $v$.
	 We define \emph{re-encoding sign} to $v$ by
	 \[
  		\Recode(v) = \begin{cases}
    	i - |u|           & \mbox{if there exists } i \mbox{ such that } v[i] = i-1 \mbox{ and } |u| < i \leq |v|, \\
    	0           & otherwise.\\
 	    \end{cases}
	\]
\end{definition}
The re-encoding sign $\Recode(v)$ is well-defined by Observation~\ref{obs:reencode}. 
Figure~\ref{fig:sign} shows an example of re-encoding signs. 
The next lemma immediately follows from Observation~\ref{obs:reencode} and Definition~\ref{def:reencoding}.
\begin{lemma}\label{lem:re-encode} Let $(u, v)$ be an edge in $\PLST(T)$ such that both $u$ and $v$ are good.
	Then for any $i \in \{1,\dots,|\str{u}{v}|\}\setminus\{\Recode(v)\}$, $\str{u}{v}[i]=\str{\Shrink(u)}{\Shrink(v)}[i]$.
	If $\Recode(v) \ge 1$, then $\str{u}{v}[\Recode(v)]=|u|+\Recode(v) -1$ and $\str{\Shrink(u)}{\Shrink(v)}[\Recode(v)] = 0$.
\end{lemma}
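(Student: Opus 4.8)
The plan is to derive both assertions of the lemma directly from Observation~\ref{obs:reencode} together with the definition of $\Recode(v)$, so the proof should be short and essentially bookkeeping. First I would fix the edge $(u,v)$ with both endpoints good, write $w = \str{u}{v}$ so that $v = uw$ and $|w| = |v|-|u|$, and recall that $\Shrink(u) = \reencode{u[2:]}$ and $\Shrink(v) = \reencode{v[2:]}$ both lie in $V$ (hence in $\PSub(T)$) precisely because $u$ and $v$ are good. The key alignment fact I need is that for any position $p \ge 2$, the symbol $v[p]$ and the symbol $\Shrink(v)[p-1]$ describe the same occurrence of a parameter symbol shifted by one, so that a position $i$ in $v$ corresponds to position $i - |u|$ in $\str{u}{v}$ and to position $i-1$ in $\Shrink(v)$.

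The next step is to invoke Observation~\ref{obs:reencode}, which says that $v$ has at most one position $i$ with $v[i] = i-1$, that $\Shrink(v)[i-1] = 0$ for that $i$, and that $\Shrink(v)[j-1] = v[j]$ for every other $j \in \{2,\dots,|v|\}$. I would split into the two cases of Definition~\ref{def:reencoding}. If no such $i$ lies in the range $|u| < i \le |v|$, then $\Recode(v) = 0$ and every position $j$ of $w$ (i.e.\ every $j$ with $|u| < j \le |v|$) falls under the ``$\Shrink(v)[j-1] = v[j]$'' clause, whence $\str{u}{v}[j-|u|] = \str{\Shrink(u)}{\Shrink(v)}[j-|u|]$ for all such $j$; re-indexing by $i = j - |u|$ gives exactly the first assertion with $\Recode(v) = 0$ (the excluded set $\{\Recode(v)\}$ is then harmless since $0$ is outside $\{1,\dots,|w|\}$). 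If the special position $i$ does lie in $|u| < i \le |v|$, then $\Recode(v) = i - |u|$, and the same Observation gives $\Shrink(v)[j-1] = v[j]$ for all $j \ne i$ in range, yielding equality of $\str{u}{v}$ and $\str{\Shrink(u)}{\Shrink(v)}$ at every position except $i - |u| = \Recode(v)$, which is the first assertion.

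For the second assertion I would handle the distinguished position only. When $\Recode(v) \ge 1$, by definition there is $i$ with $v[i] = i-1$ and $\Recode(v) = i - |u|$, so reading $v$ as $uw$ we get $\str{u}{v}[\Recode(v)] = v[i] = i - 1 = |u| + \Recode(v) - 1$, which is the claimed value. For the $\Shrink$ side, Observation~\ref{obs:reencode} states $\Shrink(v)[i-1] = 0$, and since position $i-1$ of $\Shrink(v)$ corresponds to position $i - 1 - (|u|-1) = i - |u| = \Recode(v)$ of $\str{\Shrink(u)}{\Shrink(v)}$ (because $\Shrink(u)$ has length $|u|-1$), this gives $\str{\Shrink(u)}{\Shrink(v)}[\Recode(v)] = 0$, completing the lemma.

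I expect the only genuine subtlety — the part I would be most careful about — to be the index arithmetic relating the three coordinate systems: absolute positions in $v$, relative positions in the path label $\str{u}{v}$ (offset by $|u|$), and positions in the shrunk strings (offset by $1$). Getting the $-|u|$ versus $-1$ shifts consistent, and in particular verifying that the position $i$ in $v$ lands at $i-|u|$ in $\str{u}{v}$ and simultaneously at $i-|u|$ in $\str{\Shrink(u)}{\Shrink(v)}$ after the uniform leftward shift, is where an off-by-one slip could hide; everything else follows mechanically from Observation~\ref{obs:reencode}. A secondary point worth stating explicitly is why the distinguished position of $v$, if it exists in range, is the \emph{only} position of $\str{u}{v}$ that can differ, which is exactly the uniqueness clause of the Observation and why $\Recode$ is well-defined.
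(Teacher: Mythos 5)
Your proposal is correct and takes essentially the same route as the paper: the paper gives no detailed argument, stating only that the lemma ``immediately follows from Observation~\ref{obs:reencode} and Definition~\ref{def:reencoding},'' and your proof is precisely that derivation with the index bookkeeping (positions in $v$, in $\str{u}{v}$, and in $\Shrink(v)$) made explicit. The coordinate shifts you track are handled consistently, so there is no gap.
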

Lemma~\ref{lem:re-encode} tells how to recover $\str{u}{v}$ from $\str{\Shrink(u)}{\Shrink(v)}$ using the re-encoding sign at $v$ and the depth $|u|$ of $u$.
Note that the depth $|u|$ is the depth of $u$ in parameterized suffix tries, not the number of nodes from the root to $u$ in PLSTs.

If either $u$ or $v$ is bad in an edge $(u,v)$ with $|v|-|u| \ge 2$, we give up ``reduction by suffix links'' and simply label the edge with the reference to the corresponding substring of the original text $T$, like p-suffix trees.
However, differently from p-suffix trees, not every part of the original text is referenced by an edge in our case.
We keep only the subsequence $T'$ of $T$ obtained by removing parts that are not referred to.
We label an edge $(u,v)$ connected to a bad node with an integer triple $(i, j, k)$ such that $\str{u}{v} = \reencode{T'}_k[i:j]$.



\begin{figure}[t]
	\centering
	\begin{minipage}[t]{0.485\hsize}
		\centering
		\includegraphics[scale=0.145]{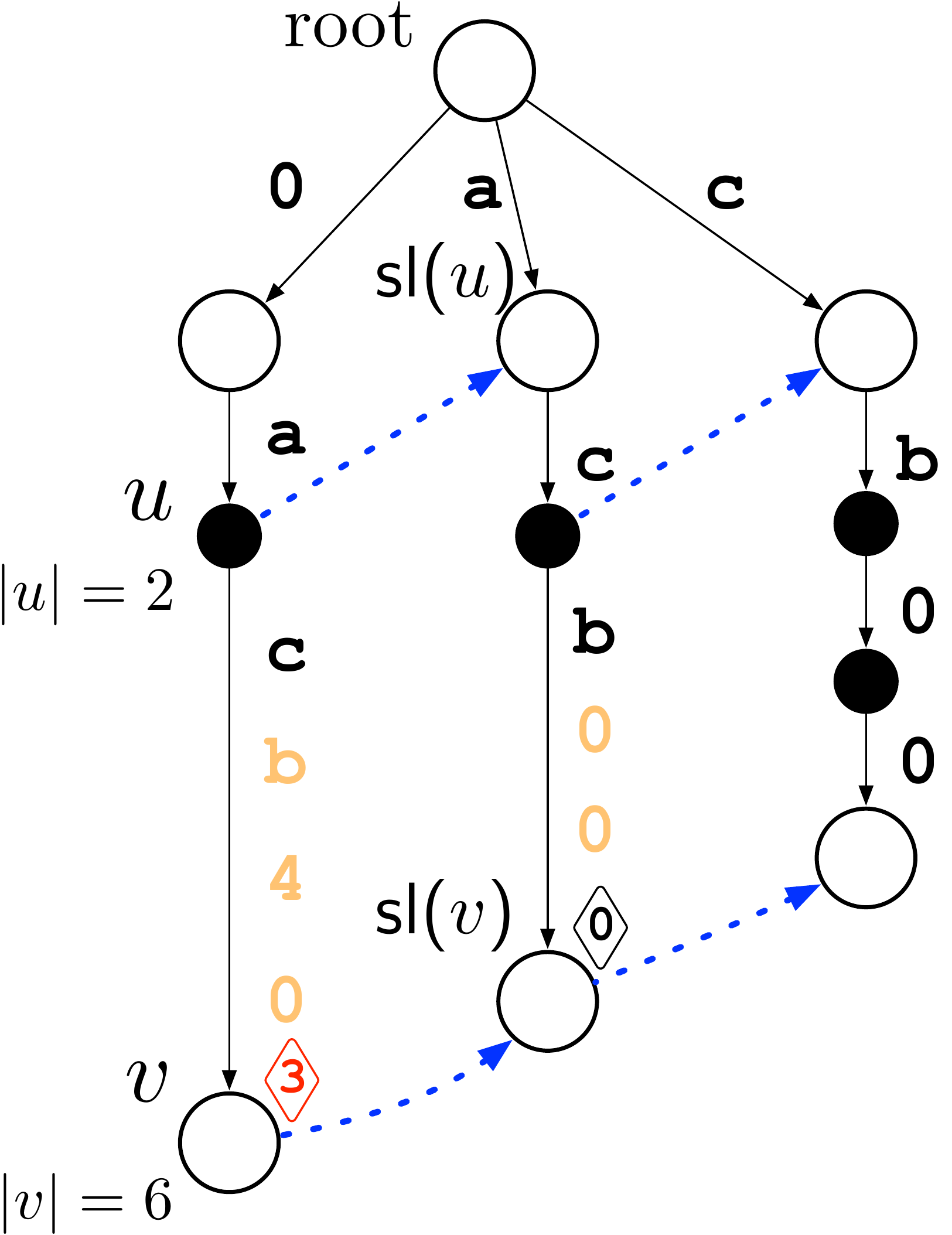}\\
		\caption{Illustrating how re-encoding signs are given.  
		For the node $v = 0{\tt a c b}40$,  we have $\Recode(v) = 5 - 2 = 3$, because $v[5] = 4$ and the parent of $v$ is $u= 0{\tt a}$ of length $2$.
		For the node $\Shrink(v) = {\tt a c b}40$, $\Recode(\Shrink(v)) = 0$.
	}
	\label{fig:sign}
	\end{minipage}
	\hfill
\begin{minipage}[t]{0.485\hsize}
		\centering
		\includegraphics[scale=0.18]{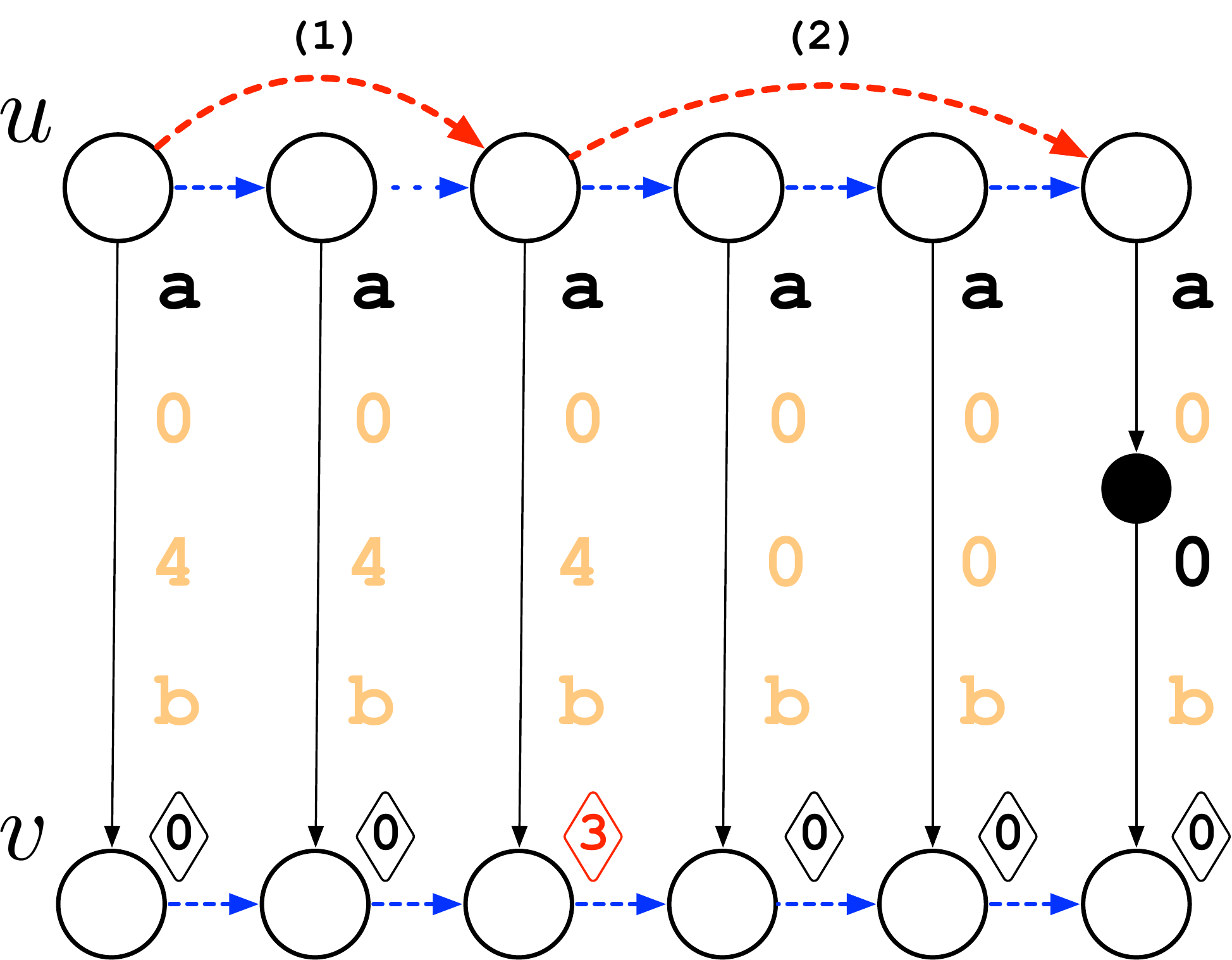}\\
		\caption{
		Illustration for our pattern matching algorithm using fast links. 
		In this figure, we check whether $\str{u}{v}$ matches pattern $p = {\tt a}04{\tt b}$ using fast links. 
	}
	\label{fig:fast_link2}
	\end{minipage}
\end{figure}


In summary, $\PLST(T)$ consists of three kinds of nodes: good Type~1, bad Type~1, and Type~2 (all good).
If $u \in V$ is a good node, $u$ has its depth, suffix link and re-encoding sign, i.e., the triple $(|u|, \Slink(u), \Recode(u))$, where $\Slink(u)=\Shrink(u)$.
Here we use the notation $\Slink(u)$ to emphasize that the suffix link $\Slink(u)$ is a pointer to the node corresponding to the string $\Shrink(u)$ rather than the string itself.
Therefore, it requires only constant size of memory space.
If $u \in V$ is bad, $u$ dose not have a suffix link, i.e., $u$ has the triple $(|u|, \mathsf{null}, \Recode(u))$.
Each edge $(u, v)$ has either a label character or triple;
if both $u$ and $v$ are good or $\str{u}{v} = 1$, the edge label is $\str{u}{v}[1]$. 
Otherwise, the edge label is a triple $(i, j, k)$ such that $\str{u}{v} = \reencode{T'}_k[i:j]$. 
If some bad nodes appear in $\PLST(T)$, we need the subsequence $T'$ of $\pre{T}$ to recover the labels of edges connecting the bad nodes. 
Otherwise, we do not need any text.

We remark that another idea to overcome the problem of the absence of $\Shrink(u)$ in a PLST for a node $u$ might be to add $\Shrink^i(u)$ to $V$ for all $i =1,\dots,|u|$ so that $V$ is closed under $\Shrink$, where $\Shrink^i(u) = \Shrink(\Shrink^{i-1}(u))$ and $\Shrink^0(u) = u $.
However, there exists a series of texts
$T_n = \tx_1 \ta_1 \dots \tx_n \ta_n \tx_1 \ta_1 \dots \tx_n \ta_n \ty_1 \ta_1 \dots \ty_n \ta_n \tz \$$
where $\tx_i, \ty_i, \tz \in \Pi$ and $\ta_i \in \Sigma$ for each $i$, for which the number of those additional nodes will be $\Omega(|T_n|^2)$. Thus, the size of the index structures cannot be kept in linear.

\subsection{Parameterized pattern matching with PLSTs}
This subsection presents our algorithm for solving the parameterized pattern matching problem as an application of PLSTs.
The function $\textsc{P-Match}$ of Algorithm~\ref{alg:simplematch} takes a prev-encoded string $p$ and a node in $\PLST(T)$
 and checks whether there is $v \in \PSub(T)$ such that $p = \str{u}{v}$.
 If it is the case, it returns the least extension ${v'}$ of $v$  such that $\vtx{v'} \in V$.
 In other words, $p$ is a prefix of $\str{u}{v'}$, where $v'$ should be $v$ itself if $v \in V$.
 Otherwise, it returns $\mathsf{null}$.

For an input pair $(p, u)$, if $p = \varepsilon$, then $\textsc{P-Match}$ returns $u$, as it is required.
Otherwise, it first tries to regain $\str{u}{v}$ for the $p[1]$-child $\vtx{v}$ of $\vtx{u}$, if $\vtx{u}$ has such a child.
At first, suppose $|p| \ge |v|-|u| = l$.
We would like to know whether $p[1:l] = \str{u}{v}$.
If $l=1$, it means that we have already confirmed that $p[1:l] = \str{u}{v}$.
Then we just go down to $v$ and recursively call $\textsc{P-Match}(p[2:],v)$.
If $l \ge 2$ and either $u$ or $v$ is bad, we refer to $T'$ and check if $p[1:l] = \str{u}{v}$ as with matching in a p-suffix tree.
If $l \ge 2$ and both $u$ and $v$ are good, we cannot know from the edge $(u, v)$ itself what $\str{u}{v}$ is except for its first symbol $\str{u}{v}[1] = p[1]$.
To recover whole $\str{u}{v}$, we use the suffix link of $u$.
Since $u$ is good, $\Slink(u)$ is defined.
If $\Recode(v) = 0$, we have $\str{u}{v}=\str{\Shrink(u)}{\Shrink(v)}$ by Lemma~\ref{lem:re-encode},
and we simply call $\textsc{P-Match}(p[1:l],\Slink(u))$.
 Otherwise, 
  we have $p[1:l] = \str{u}{v}$ if and only if $p[\Recode(v)] = |u| + \Recode(v) -1$ and $p'[1:l] = \str{\Shrink(u)}{\Shrink(v)}$, where 
\[
	p'[i] = \begin{cases} 0 & \text{if $i = \Recode(v)$,}
	\\	p[i] & \text{otherwise} \end{cases}
\]
for $i=1,\dots,|p|$.
Thus, the recursive call of $\textsc{P-Match}(p'[1:l],\Slink(u))$ returns $\mathsf{null}$ iff $p[1:l] \neq \str{u}{v}$.
If $\textsc{P-Match}(p'[1:l],\Slink(u))$ returns a node, then $p[1:l] = \str{u}{v}$ and thus we continue matching by calling $\textsc{P-Match}(p[l+1:],v)$.

The above discussion is valid when $|p| \le |v|-|u|$.
If $\Recode(v) =0$ or $\Recode(v) > |p|$, then $p$ is a prefix of $\str{u}{v}$ iff $p$ is a prefix of $\str{\Shrink(u)}{\Shrink(v)}$.
Otherwise, $p$ is a prefix of $\str{u}{v}$ iff $p[\Recode(v)] = |u|+\Recode(v)-1$ and $p'$ is a prefix of $\str{\Shrink(u)}{\Shrink(v)}$.
Thus the recursion is justified.
If $\textsc{P-Match}(p'[1:l],\Slink(u))$ returns a node, $p$ is a prefix of $\str{u}{v}$ and we call $\textsc{P-Match}(\varepsilon,v)$, which returns $v$.

\begin{algorithm2e}[t]
\caption{$\textsc{P-Match}(p, u)$}
\label{alg:simplematch}
\KwIn{A string $p$ and a node $u$ in $\PLST(T)$}
\KwOut{The highest descendant $v$ of $u$ such that $p$ is a prefix of $\str{u}{v}$}
\lIf{$p = \varepsilon$}{%
	\Return $u$}
\Else{
	\lIf{$\Child(u, p[1])$ is undefined}{%
		\Return $\mathsf{null}$}
	\Else{
		$v \leftarrow \Child(u, p[1])$\;
		$l \leftarrow \min\{|p|,|v|-|u|\}$\;
		\If{$l \geq 2$ and $u$ or $v$ is bad}{
				let $\alpha$ be the label of the edge $(u, v)$\;
				\lIf{$p[:l] \neq \alpha[:l]$}{%
					\Return $\mathsf{null}$}
		}
		\ElseIf{$l \ge 2$}{
			\If{$1 \le \Recode(v) \le |p|$}{
				\lIf{$p[\Recode(v)] = |u| + \Recode(v) -1$}{%
					$p[\Recode(v)] \leftarrow 0$%
				}\lElse{%
					\Return $\mathsf{null}$}
			}
			\lIf{$\textsc{P-Match}(p[1:l],\Slink(u)) = \mathsf{null}$\label{ln:fastlink}}{%
				\Return $\mathsf{null}$}
		}
	}
	\Return $\textsc{P-Match}(p[l+1:],v)$\;
}
\end{algorithm2e}

\begin{proposition}\label{lem:simple}
We can decide whether $T$ has a substring that p-matches $P$ using Algorithm~\ref{alg:simplematch}.
\end{proposition}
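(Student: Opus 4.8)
The plan is to reduce the decision problem to a single invocation of the algorithm and then prove that $\textsc{P-Match}$ meets the specification stated just before Algorithm~\ref{alg:simplematch}. First I would use the two basic facts from the preliminaries: $w_1 \approx w_2$ iff $\pre{w_1}=\pre{w_2}$, and $\PSub(T)$ is exactly the set of prev-encodings of substrings of $T$ (equivalently, the node set of $\PSTrie(T)$). Together these give that $T$ has a substring p-matching $P$ if and only if $\pre{P}\in\PSub(T)$. Since $\Root$ corresponds to $\varepsilon$ and $\str{\varepsilon}{v}=v$, it then suffices to show that $\textsc{P-Match}(\pre{P},\Root)$ returns a non-$\mathsf{null}$ node exactly when $\pre{P}\in\PSub(T)$. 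Thus the whole content is the following correctness claim for $\textsc{P-Match}$, which I would prove as a lemma: for every $u\in V$ and every pv-string $p$, $\textsc{P-Match}(p,u)$ returns the least $v'\in V$ extending the string $up$ whenever there is $v\in\PSub(T)$ with $p=\str{u}{v}$, and returns $\mathsf{null}$ otherwise. I would prove this by induction on the pair $(|p|,|u|)$ ordered lexicographically. This measure works: the suffix-link call $\textsc{P-Match}(p'[1:l],\Slink(u))$ has argument of length $l\le|p|$, and when $l=|p|$ it follows the link to $\Shrink(u)$ of strictly smaller depth $|u|-1$; the descending call $\textsc{P-Match}(p[l+1:],v)$ strictly shortens $p$ since $l\ge 1$. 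Hence every recursive call strictly decreases the measure and the recursion terminates.

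The induction is fed by two ``no-branching'' properties inherited from the node definition. First, for any edge $(u,v)$ of $\PLST(T)$ there is no branching node of $\PSTrie(T)$ strictly between $u$ and $v$, because every branching node lies in $V_1\subseteq V$ and an edge admits no $V$-node strictly between its endpoints; consequently the path from $u$ to $v$ is linear, so for $|p|\le|v|-|u|$ with $p[1]=\str{u}{v}[1]$ one has $up\in\PSub(T)$ iff $p=\str{u}{v}[1:|p|]$. Second, when both $u$ and $v$ are good, there is likewise no branching node strictly between $\Shrink(u)$ and $\Shrink(v)$: a branching $b$ with $\Shrink(u)\prec b\prec\Shrink(v)$ would have a unique ``un-shrunk'' counterpart $b'$ with $u\prec b'\prec v$ and $\Shrink(b')=b$, and $b'$ would be a $V$-node (Type~2, as $\Shrink(b')=b\in V_1$) strictly between $u$ and $v$, contradicting that $(u,v)$ is an edge. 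This mirrors the LST property quoted in Section~\ref{ST}, and I would state it as a separate lemma. I would also record that every pv-substring is a prefix of some suffix, whose node is a leaf in $V_1$ thanks to the sentinel $\$$, so the least $V$-extension invoked in the specification always exists.

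With these in hand the induction proceeds through the branches of Algorithm~\ref{alg:simplematch}. The base case $p=\varepsilon$ returns $u$, which is its own least $V$-extension. If $\Child(u,p[1])$ is undefined, the first no-branching property shows that no pv-substring extending $up[1]$ has a $V$-child of $u$ starting with $p[1]$, so $up\notin\PSub(T)$ and returning $\mathsf{null}$ is correct. The case $l=1$ and the bad-node case, where the decoded edge label equals $\str{u}{v}$ literally, are handled by comparing $p[1:l]$ with $\str{u}{v}[1:l]$ and recursing, with the first property justifying that a prefix match certifies membership in $\PSub(T)$. The crux is the both-good case with $l\ge 2$: by Lemma~\ref{lem:re-encode} the strings $\str{u}{v}$ and $\str{\Shrink(u)}{\Shrink(v)}$ agree off position $\Recode(v)$, where the former equals $|u|+\Recode(v)-1$ and the latter equals $0$. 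I would show that after the sign check and the assignment $p[\Recode(v)]\leftarrow 0$ (which touches only $p[1:l]$, as $\Recode(v)\le l$) the modified prefix satisfies $p'[1]=\str{\Shrink(u)}{\Shrink(v)}[1]$ in every case, and that $p[1:l]=\str{u}{v}[1:l]$ iff $p'[1:l]=\str{\Shrink(u)}{\Shrink(v)}[1:l]$. The second no-branching property then sharpens the recursive certificate to $\Shrink(u)\,p'[1:l]\in\PSub(T)$ iff $p'[1:l]=\str{\Shrink(u)}{\Shrink(v)}[1:l]$, so by the induction hypothesis the call $\textsc{P-Match}(p'[1:l],\Slink(u))$ returns $\mathsf{null}$ exactly when $p[1:l]\neq\str{u}{v}[1:l]$. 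In each surviving branch the final call $\textsc{P-Match}(p[l+1:],v)$ continues from $v$ on the untouched suffix $p[l+1:]$, and the induction hypothesis yields the least $V$-extension of $up$ (or $v$ itself when $p$ is exhausted on the edge).

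The step I expect to be the main obstacle is this both-good case, specifically pinning down the second no-branching property and the exact correspondence between $p[1:l]=\str{u}{v}[1:l]$ and $p'[1:l]=\str{\Shrink(u)}{\Shrink(v)}[1:l]$ through the re-encoding sign. The delicate points are the boundary case $\Recode(v)=1$, where the re-encoding alters the first edge symbol and one must verify that the descent from $\Shrink(u)$ still selects the correct child, and the argument that the recursive call certifies \emph{equality with the edge label} rather than merely the existence of some downward path from $\Shrink(u)$ — which is exactly what the no-branching property supplies.
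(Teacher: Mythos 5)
Your proposal is correct and follows essentially the same route as the paper: the paper gives no formal proof of this proposition, relying instead on the informal case analysis preceding Algorithm~\ref{alg:simplematch} (the specification of $\textsc{P-Match}$, the split into the cases $l=1$, the bad-node case, and the good-good case, and the use of Lemma~\ref{lem:re-encode} together with the sign check), which is exactly the skeleton of your induction. Your explicit additions --- the lexicographic termination measure $(|p|,|u|)$, the two no-branching lemmas (the second being the parameterized analogue of the LST property that no branching node lies between $\Slink(u)$ and $\Slink(v)$ for an edge $(u,v)$, which the paper only states for ordinary LSTs in Section~\ref{ST}), and the verification of the boundary case $\Recode(v)=1$ --- are faithful formalizations of steps the paper leaves implicit, and each of them holds as you sketched.
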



The time complexity of Algorithm~\ref{alg:simplematch} is not linear as it is.
Suppose that $\textsc{P-Match}(p, u)$ is called.
It can be the case $|v|-|u| \ge |p| \ge 2$ and either $\Recode(v) = 0$ or $\Recode(v) > l$ where $v = \Child(u, p[1])$.
In this case, the algorithm simply calls $\textsc{P-Match}(p,\Slink(u))$, where the first argument has not changed from the preceding call.
Such recursion may be repeated, and amortized time complexity is not linear.
The same difficulty and a solution have already been discussed by Crochemore et al.~\cite{crochemore2016linear} for
LSTs.
Following them, we introduce \emph{fast links} as follows, which allow us to skip recursions that always preserve the first argument.

\begin{definition}[Fast link]
\label{def:fastlink}
	For each edge $(u, v) \in V \times V$ such that $|v| - |u| > 1$ and both $u$, $v$ are good, the \emph{fast link} for $(u, v)$ is defined to be $\Flink(u, v) = \Slink^{k}(u)$ where $k \ge 1$ is the smallest integer satisfying either
	$|v_k| < |v| - k$ or $0 < \Recode(v_k)$,
	where $v_k = \Child(\Slink^k(u),a)$ for $a = \str{u}{v}[1]$.
\end{definition}




Algorithm~\ref{alg:simplematch} will run in linear time by replacing $\Slink(u)$ in Line~\ref{ln:fastlink} by $\Flink(u, v)$.
If $|v_k| < |v| - k = |\Shrink^k(v)|$, the node $v_k$ occurs between $\Shrink^k(u)$ and $\Shrink^k(v)$.
Then, $\textsc{P-Match}(p,\Slink^k(u))$ will call $\textsc{P-Match}(p[1:|v_k|-|\Slink^k(u)|],\Slink^{k+1}(u))$. 
When $0 < \Recode(v_k)$, we change the $\Recode(v_k)$-th symbol of $p$, which must be a positive integer, to $0$.
Therefore, the number of fast links we follow is bounded by $2|p|$. 
Figure~\ref{fig:fast_link2} shows how to p-match $\str{u}{v}$ and $p = {\tt a}04{\tt b}$ using fast links. 
We know that $p[1] = \str{u}{v}[1] = {\tt a}$.
After following the fast link (1), we check whether $p[\Recode(\Shrink^2(v))]=4$ and rewrite the value of $p[\Recode(\Shrink^2(v))]$ to $0$.
After using (2), we check whether $p[3]= 0$. 
In this way, we can know that $p$ matches $\str{u}{v}$.



\begin{restatable}{theorem}{pmatch}
\label{lem:p-match}
Given $\PLST(T)$ and a pattern $P$ of length $m$, we can decide whether $T$ has a substring that p-matches $P$ in $O(m)$ time.
\end{restatable}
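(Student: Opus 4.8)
The plan is to reduce the decision problem to a single invocation of the matching routine and then treat correctness and running time separately, the latter being the crux. First I would record the reduction: $T$ has a substring that p-matches $P$ if and only if some substring $w$ of $T$ satisfies $w \approx P$, which by the characterization $w_1 \approx w_2 \iff \pre{w_1}=\pre{w_2}$ is equivalent to $\pre{P} \in \PSub(T)$, i.e. $\pre{P}$ labels a (possibly implicit) position of $\PSTrie(T)$. Computing $p := \pre{P}$ costs $O(m)$ time since $\Sigma$ and $\Pi$ have constant size. By the specification of $\textsc{P-Match}$ established in Proposition~\ref{lem:simple}, the call $\textsc{P-Match}(p,\Root)$ returns a node exactly when $p$ is a prefix of $\str{\Root}{v'}$ for some $v' \in V$, which holds iff $\pre{P} \in \PSub(T)$. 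Hence the whole decision amounts to testing whether $\textsc{P-Match}(p,\Root) \neq \mathsf{null}$.

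Next I would show that replacing $\Slink(u)$ by $\Flink(u,v)$ in Line~\ref{ln:fastlink} preserves the return value, so correctness still follows from Proposition~\ref{lem:simple}. The key observation is that the recursions to be skipped are exactly the \emph{pass-through} ones: for an edge $(u,v)$ with $|v|-|u|\ge 2$ and both endpoints good, if at some level $j$ the child $v_j := \Child(\Slink^j(u),a)$ with $a = \str{u}{v}[1]$ satisfies $|v_j| = |v|-j$ and $\Recode(v_j)=0$, then the naive recursion at level $j$ merely recalls itself with the first argument $p[1:l]$ and the value $l$ unchanged, advancing the node to $\Slink^{j+1}(u)$. By Definition~\ref{def:fastlink}, $k$ is the least level at which this pass-through fails, so $\Flink(u,v)=\Slink^{k}(u)$ jumps directly past all pass-through levels. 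I would verify by induction on $k$ that the pattern array and the value $l$ coincide with what the naive version holds upon reaching $\Slink^{k}(u)$ (using $\Recode(v_j)=0$ for $j<k$, so no symbol is rewritten before level $k$), and that from that point on the two behaviours agree via Lemma~\ref{lem:re-encode}.

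For the running time I would set up an amortized argument in the spirit of Crochemore et al.~\cite{crochemore2016linear}. The per-step cost at any node is $O(1)$ except for comparing a pattern prefix against a bad edge label $\alpha$ recovered from the stored subsequence $T'$ through $\reencode{T'}_k[i:j]$, which costs $O(l)$ and is charged to the $l$ pattern symbols it consumes. It therefore suffices to bound the total number of recursive calls, and in particular of fast-link follows, by $O(m)$. Classifying each fast-link follow by which clause of Definition~\ref{def:fastlink} stopped it: a follow stopped by $|v_k|<|v|-k$ is immediately succeeded by a descent into the intermediate node $v_k$ that consumes at least one symbol, whereas a follow stopped by $0<\Recode(v_k)$ rewrites one currently-positive symbol of $p$ to $0$. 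Since the algorithm only ever turns a symbol into $0$ and never the reverse, the latter occurs at most once per position; maintaining a potential equal to the number of still-positive symbols plus the number of pattern symbols yet to be consumed, every follow strictly decreases this potential, which starts at $O(m)$. Hence at most $2m$ follows occur in total, and together with the $O(m)$ preprocessing this yields the claimed $O(m)$ bound.

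I expect the time bound to be the main obstacle, and within it the fact that the pattern array is mutated in place across deeply nested recursive calls. Fast-link subcalls re-examine prefixes $p[1:l]$, so I must argue that the zeroing events are globally disjoint — a position once set to $0$ is never demanded to be positive nor rewritten again — and that the potential genuinely decreases on every follow despite this nesting. This is precisely where the re-encoding signs complicate the original LST analysis of Crochemore et al.~\cite{crochemore2016linear}, and carefully verifying the disjointness together with the potential decrease is the heart of the proof.
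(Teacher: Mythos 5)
Your proposal follows essentially the same route as the paper: reduce the decision to computing $\pre{P}$ and making a single call of $\textsc{P-Match}$ from the root (correctness via Proposition~\ref{lem:simple}), argue that substituting $\Flink(u,v)$ for $\Slink(u)$ skips exactly the pass-through recursions without changing the return value, and bound the number of fast-link follows by $2m$ using the two budgets of Definition~\ref{def:fastlink} --- follows stopped by an intermediate node ($|v_k| < |v|-k$), which lead to consuming pattern symbols, and follows stopped by a positive re-encoding sign ($0 < \Recode(v_k)$), which zero a positive pattern symbol that can never become positive again. Your potential-function phrasing is just a more explicit formulation of the paper's own counting argument, so the two proofs coincide in substance.
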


\subsection{The size of PLSTs}\label{sec:size}

We now show that the size of $\PLST(T)$ is linear with respect to the length $n$ of a text $T$.
First, we show a linear upper bound on the number of nodes of $\PLST(T)$.
The nodes of Type~1 appear in the p-suffix tree, so they are at most $2n$~\cite{PMA}.
It is enough to show that the number of nodes of Type~2 is linearly bounded as well.

\begin{restatable}{lemma}{TypeTwo}
\label{lem:Type2}
The number of Type~2 nodes in $\PLST(T)$ is smaller than $2n$.
\end{restatable}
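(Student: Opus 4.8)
The plan is to bound the number of Type~2 nodes by injecting them into a set whose size is already known to be linear, namely the Type~1 nodes (which number at most $2n$ since they coincide with the nodes of the p-suffix tree). Recall that $u \in V_2$ means $u$ is neither a leaf nor branching in $\PSTrie(T)$, yet $\Shrink(u) \in V_1$. So a Type~2 node is a non-branching, non-leaf node whose suffix-link image is a leaf or branching node. The natural map to consider is $\Shrink$ itself, sending each Type~2 node $u$ to $\Shrink(u) \in V_1$. The crux will be to show this map is not too far from injective.

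First I would examine the fibers of $\Shrink$ restricted to $V_2$. Fix a target $w \in V_1$ and ask which $u \in V_2$ satisfy $\Shrink(u) = \reencode{u[2:]} = w$. Any such $u$ has the form $u = a\,x$ where $a = u[1]$ and $\reencode{x} = w$, so $u$ is recovered from $w$ by prepending a first symbol and possibly adjusting encodings. The key observation is that the re-encoding collapses only the positions holding a parameter value that ``points past'' the deleted first symbol; concretely, for each target $w$ there are only boundedly many pv-strings $u$ with $\reencode{u[2:]} = w$, because the first symbol $a$ ranges over the constant alphabet $\Sigma$ together with the finitely many ``parameter'' cases ($u[1]$ a fresh parameter, encoded $0$, versus $u[1]$ repeating a later position, which then forces exactly one position of $u[2:]$ to carry a positive value instead of $0$). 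Since $|\Sigma|$ and $|\Pi|$ are constant, each fiber has constant size. That alone gives $|V_2| = O(n)$, but the sharp bound $|V_2| < 2n$ requires a tighter argument.

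To get the explicit constant $2$, I would instead follow the strategy used for ordinary LSTs and argue along suffix-link chains. For a Type~2 node $u$, the node $\Shrink(u) \in V_1$ is branching or a leaf; moreover $u$ is non-branching in $\PSTrie(T)$, so $u$ has a unique child, and the whole point of Type~2 is that among the nodes on the path whose suffix link lands in $V_1$, we keep only the nearest one. The plan is to show that distinct Type~2 nodes $u$ map under $\Shrink$ to \emph{distinct} Type~1 nodes \emph{except} possibly when the re-encoding creates a coincidence, and to charge each Type~2 node to the \emph{branching} Type~1 node $\Shrink(u)$; since each such target can be hit by at most a constant number of Type~2 sources (by the fiber analysis above, refined to show the relevant count is at most one ordinary preimage plus one re-encoded preimage), summing over the at most $2n$ Type~1 nodes and discounting leaves yields the strict bound $|V_2| < 2n$.

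\textbf{The main obstacle} I anticipate is precisely controlling the re-encoding ambiguity in the fibers of $\Shrink$: unlike the constant-string case where $\Slink$ is genuinely injective on non-branching nodes with branching images, here two different Type~2 nodes $u_1 \neq u_2$ can in principle re-encode to the same $w = \Shrink(u_1) = \Shrink(u_2)$ because deleting the first symbol zeroes out exactly one position (Observation~\ref{obs:reencode}). I would need to show that this many-to-one effect is mild enough—at most doubling—so that the count stays below $2n$ rather than merely $O(n)$. Pinning down that exactly the re-encoded position governs the overcounting, and that it contributes the factor reflected in the ``$2n$'' rather than a larger constant, is the delicate step; everything else (the leaf/branching bound of $2n$ on $V_1$, the definition of Type~2) is routine given the earlier development.
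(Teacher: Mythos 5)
Your proposal has a genuine gap at exactly the step you flagged as delicate, and it is not repairable along the route you chose. The key claim in your second step --- that each branching Type~1 node can be hit by at most two Type~2 sources under $\Shrink$, ``one ordinary preimage plus one re-encoded preimage'' --- is false. All $|\Sigma|$ ordinary preimages $aw$ ($a \in \Sigma$), the fresh-parameter preimage $0w$, \emph{and} a re-encoded preimage can simultaneously be Type~2. Concretely, take $T = \mathtt{axycbxycxxd}\Dol$ with $\Sigma = \{\mathtt{a},\mathtt{b},\mathtt{c},\mathtt{d},\Dol\}$ and $\Pi = \{\mathtt{x},\mathtt{y}\}$. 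The node $w = 0$ is branching (it has children $00$, $01$, $0\mathtt{c}$, $0\mathtt{d}$), while each of the five nodes $\mathtt{a}0$, $\mathtt{b}0$, $\mathtt{c}0$, $00$, $01$ occurs in $T$ with a unique one-symbol continuation, hence is a non-branching internal node of $\PSTrie(T)$; and each has $\Shrink$-image $0$ (for $01$ this uses the re-encoding, $\Shrink(01) = \reencode{1} = 0$). So all five lie in $V_2$ and in a single fiber. In general the fiber over $w$ restricted to $V_2$ is bounded only by $|\Sigma| + |\Pi| + 1$ (your first step is also incomplete here: the number of ``re-encoded'' preimages equals the number of zeros in $w$, which is bounded by $|\Pi|$, not obviously by one), so charging Type~2 nodes to their direct suffix-link targets can only ever yield an alphabet-dependent linear bound, never the alphabet-independent bound $2n$ that the lemma asserts.

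The paper's proof does not use the map $\Shrink : V_2 \to V_1$ at all. It instead organizes all internal nodes into the $n-1$ implicit suffix-link chains $(w, \Shrink(w), \Shrink^2(w), \dots, \Shrink^{|w|}(w))$ for $w = \pre{T[:k]}$, $1 \le k < n$. Since the $\Shrink$-image of a Type~2 node is Type~1 by definition, between any two Type~2 nodes on a chain there is a Type~1 node; hence every Type~2 node, except possibly the \emph{first} one on its chain, is the first Type~2 node reachable by suffix links from some Type~1 node. The assignment sending each $u \in V_1$ to the first Type~2 node on its suffix-link path is a partial function out of $V_1$, and it is undefined on leaves (suffix-link paths of leaves contain only leaves), so its image has size at most $n$. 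Adding at most one uncharged Type~2 node per chain gives $|V_2| \le n + (n-1) < 2n$. The crucial difference from your plan is where the ``at most one'' comes from: the paper bounds how many Type~2 nodes each Type~1 node can \emph{designate along its chain} (trivially one), rather than how many Type~2 nodes can \emph{map onto} a Type~1 node in one suffix-link step, where no constant bound of $2$ holds.
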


\begin{proof}
	Let us consider an implicit suffix link chain in $\PSTrie(T)$ starting from $w = \pre{T[:k]}$ with $1 \le k < n$, i.e.,
	 $(w,\Shrink(w),\Shrink^2(w),\dots,\Shrink^{|w|}(w))$.
	 $\PSTrie(T)$ has $n-1$ such chains and every internal node of $\PLST(T)$ appears in at least one chain.
	 If a chain has two distinct Type~2 nodes $\Shrink^i(w)$ and $\Shrink^j(w)$ with $i < j$, since $\Shrink^{i+1}(w)$ is Type~1 by definition,
	 one can always find a Type~1 node between them.

	Define a binary relation $R$ between $V_1$ and $V_2$ by
	\[
		R = \{\, (u,v) \in V_1 \times V_2 \mid \text{there is $i$ s.t.\ $v = \Shrink^i(u)$ and $\Shrink^j(u) \notin V_2$ for all $j < i$}\,\}
	\,\]
	and let $R_2 = \{\, v \in V_2 \mid (u,v) \in R \text{ for some } u \in V_1\,\}$.
	Since $R$ is a partial function from branching nodes to  Type~2 nodes, we have $|R_2| \le n$.
	By the above argument on a chain, each chain has at most one Type~2 node $v \in V_2$ such that $v \notin R_2$.
	Since there are $n-1$ chains, we have $|V_2 \setminus R_2| < n$.
	All in all, $|V_2| = |R_2|+|V_2 \setminus R_2| < n + n = 2n$.
	\qed
\end{proof}

The number of edges and their labels, as well as 
the number of suffix links, depth and re-encoding sign for nodes, is asymptotically bounded above by the number of nodes in $\PLST(T)$. 
$T'$ is a subsequence of $\pre{T}$, thus its length is $O(n)$.  
Therefore, the size of $\PLST(T)$ is $O(n)$.

\begin{theorem}
 Given a p-string $T$ of length $n$, the size of $\PLST(T)$ is $O(n)$.
\end{theorem}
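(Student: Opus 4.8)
The plan is to bound each component of $\PLST(T)$ separately and sum them, leveraging the fact that all the structural parameters (edges, labels, suffix links, depths, re-encoding signs) are tied to the node set $V$, whose size was already controlled in the preceding lemmas. First I would recall that $V = V_1 \cup V_2$, where Type~1 nodes are exactly the branching nodes and leaves of $\PSTrie(T)$. Since these coincide with the nodes of the p-suffix tree $\PSTree(T)$, a standard bound gives $|V_1| \le 2n$ (the text $T$ has $n$ suffixes, so there are at most $n$ leaves and at most $n-1$ internal branching nodes). Combining this with Lemma~\ref{lem:Type2}, which already establishes $|V_2| < 2n$, yields $|V| = |V_1| + |V_2| < 4n = O(n)$.

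Next I would argue that each remaining ingredient contributes only $O(1)$ per node or per edge, so the total is dominated by $|V|$. Because $\PLST(T)$ is a tree, the number of edges is $|V| - 1 = O(n)$. Every edge carries either a single label character or a triple $(i,j,k)$ of integers, which is $O(1)$ space regardless. For the node annotations: by the discussion in the excerpt, every good node stores the constant-size triple $(|u|, \Slink(u), \Recode(u))$, and every bad node stores $(|u|, \mathsf{null}, \Recode(u))$; in either case this is constant space, so the depths, suffix links, and re-encoding signs together contribute $O(|V|) = O(n)$. Likewise there is at most one fast link per edge of the relevant kind, so fast links add only $O(n)$.

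The last ingredient is the reference text $T'$. Here I would invoke the observation already made in the excerpt that $T'$ is a subsequence of $\pre{T}$, obtained by deleting the symbols of $\pre{T}$ that are not referenced by any edge label. Since $\pre{T}$ has length exactly $n$, any subsequence of it has length at most $n$, so $|T'| = O(n)$. Assembling the pieces, the total size of $\PLST(T)$ is the sum $O(|V|) + O(\text{edges}) + O(|T'|) = O(n) + O(n) + O(n) = O(n)$, which is the claimed bound.

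I do not expect a genuine obstacle in this final theorem, since the real combinatorial work—the linear bound on Type~2 nodes—has already been discharged in Lemma~\ref{lem:Type2}, and the Type~1 bound is inherited directly from the known linearity of p-suffix trees. The only point requiring mild care is making explicit that \emph{every} auxiliary field (label, depth, suffix link, sign, fast link) is individually constant-size and attached to a node or edge, so that no hidden super-linear factor creeps in; once that bookkeeping is stated, the result is an immediate summation.
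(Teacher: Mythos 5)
Your proposal is correct and follows essentially the same route as the paper: Type~1 nodes are bounded by $2n$ via their coincidence with p-suffix tree nodes, Type~2 nodes by Lemma~\ref{lem:Type2}, all per-node and per-edge annotations (labels, depths, suffix links, re-encoding signs) are constant-size, and $T'$ is a subsequence of $\pre{T}$ and hence of length at most $n$. The only difference is your explicit accounting of fast links, which the paper's size argument leaves implicit but which changes nothing, since there is at most one per edge.
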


\section{Experiments}\label{sec:exp}

We performed comparative experiments on the number of nodes of PLSTs and p-suffix trees for four sorts of text strings changing their length.
Text strings we used are
 random strings over a constant alphabet $\Sigma$ with $|\Sigma|=2$ and those over a parameter alphabet $\Pi$ with $|\Pi|=2$, 
 and
 Fibonacci strings over $\Sigma$ with $|\Sigma|=2$ and those over $\Pi$ with $|\Pi|=2$.
PLSTs for constant strings are of course identical to LSTs. 
For random strings, we measured the average number of nodes for 100 strings of each length $n=10,\dots, 10240$.
For Fibonacci strings, we measured the number of nodes for each of the 11th through 22nd Fibonacci strings.
The results of our experiments are shown in Table~\ref{tab:exp}. 
Recall that p-suffix trees consist of Type~1 nodes, while PLSTs have Type~2 nodes in addition.
For random strings, we can see that the number of Type~2 nodes is close to the text length.
On the other hand, for Fibonacci strings, PLSTs have few Type~2 nodes.
In these experiments, since no bad node appeared except the root, PLSTs did not need any text, that is, $T' = \epsilon$.

Because the size of each node is the same in a p-suffix tree and a PLST, the difference of the memory efficiency of the two data structures is just the difference of the memory size for $\pre{T}$ and the Type~2 nodes (and $T'$ if necessary).
The experimental results suggest that PLSTs use less memory than p-suffix trees for indexing highly repetitive strings such as Fibonacci strings.

\begin{table}[t]
\begin{center}
\caption{The numbers of nodes of PLSTs for different sorts of strings}
\scalebox{0.8}{
\label{tab:exp}
  \begin{tabular}{|r||p{4em}|p{3.5em}||p{4em}|p{3.5em}|} \hline
  	&	\multicolumn{4}{c|}{random strings}
\\ \cline{2-5}
      & \multicolumn{2}{c||}{constant string} & \multicolumn{2}{|c|}{p-string} \\ \hline
     \multicolumn{1}{|c||}{\small{length $n$}} & \multicolumn{1}{|c|}{Type~1} & \multicolumn{1}{c||}{Type~2} & \multicolumn{1}{|c|}{Type~1} & \multicolumn{1}{c|}{Type~2} \\\hline  \multicolumn{5}{|c|}{}  \\[-10pt] \hline
   10 &\hfill 16.98 &\hfill 6.04 &\hfill 16.93 &\hfill 5.23\\ 
   20 &\hfill 35.66 &\hfill 12.78 &\hfill 35.72 &\hfill 12.27 \\
   40 &\hfill 74.58 &\hfill 27.25 &\hfill 74.53 &\hfill 26.22\\
   80 &\hfill 153.61 &\hfill 56.82 &\hfill 153.48 &\hfill 56.04\\ 
  160 &\hfill 312.37 &\hfill 115.55 &\hfill 312.45 &\hfill 115.24\\ \hline
  320 &\hfill 631.40 &\hfill 234.55 &\hfill 631.27 &\hfill 235.32\\
  640 &\hfill 1270.34 &\hfill 477.29 &\hfill 1270.47 &\hfill 475.34\\ 
 1280 &\hfill 2549.35 &\hfill 956.18 &\hfill 2549.39 &\hfill 957.03\\
 2560 &\hfill 5108.37 &\hfill 1923.62 &\hfill 5108.48 &\hfill 1922.97\\
 5120 &\hfill 10227.48 &\hfill 3845.35 &\hfill 10227.29 &\hfill 3853.97\\ \hline
10240 &\hfill 20466.49 &\hfill 7710.50 &\hfill 20466.14 &\hfill 7704.25\\  \hline
    \end{tabular}
    }
  \quad
\scalebox{0.8}{
  \begin{tabular}{|r||p{3.5em}|p{3.5em}||p{3.5em}|p{3.5em}|} \hline
  	&	\multicolumn{4}{c|}{Fibonacci strings}
\\ \cline{2-5}
      & \multicolumn{2}{c||}{constant string} & \multicolumn{2}{|c|}{p-string} \\ \hline
     \multicolumn{1}{|c||}{length $n$} & \multicolumn{1}{|c|}{Type~1} & \multicolumn{1}{c||}{Type~2} & \multicolumn{1}{|c|}{Type~1} & \multicolumn{1}{c|}{Type~2} \\
	\hline  \multicolumn{5}{|c|}{}  \\[-10pt] \hline
   90 &\hfill 178 &\hfill 12 &\hfill 177 &\hfill 12\\
  145 &\hfill 285 &\hfill 12 &\hfill 285 &\hfill 13\\ 
  234 &\hfill 466 &\hfill 15 &\hfill 465 &\hfill 15\\ 
  378 &\hfill 751 &\hfill 15 &\hfill 751 &\hfill 16\\
  611 &\hfill 1220 &\hfill 18 &\hfill 1219 &\hfill 18\\ \hline 
  988 &\hfill 1971 &\hfill 18 &\hfill 1971 &\hfill 19\\
 1598 &\hfill 3194 &\hfill 21 &\hfill 3193 &\hfill 21\\
 2585 &\hfill 5165 &\hfill 21 &\hfill 5165 &\hfill 22\\
 4182 &\hfill 8362 &\hfill 24 &\hfill 8361 &\hfill 24\\ 
 6766 &\hfill 13527 &\hfill 24 &\hfill 13552 &\hfill 25\\ \hline
 10947 &\hfill 21892 &\hfill 27 &\hfill 21918 &\hfill 27\\ \hline
    \end{tabular}
    }
  \end{center}
\end{table}

\section{Conclusion and future work}\label{sec:conclusion}

In this paper, we presented an indexing structure called a PLST for the parameterized pattern matching problem.
Given a p-string $T$ of length $n$, the size of PLST for $\para{T}$ is $O(n)$.
We presented an algorithm that solves the problem in $O(m)$ time, where $m$ is the length of the pattern.
We experimentally showed that PLST is space-saving from p-suffix tree for indexing highly repetitive strings such as Fibonacci strings.

For PLSTs to be useful for various applications, like computing the longest common substrings, an efficient algorithm for constructing PLSTs is required like LSTs~\cite{onlineLST}.
Furthermore, the ideas developed in this paper may be useful to generalize 
L-CDAWGs~\cite{takagi2017linear} to a data structure for parameterized strings.

\bibliographystyle{plainurl} 
\newpage
\bibliography{docs/ref}
\newpage
\appendix
\section{Appendix}

\subsection{The implicit suffix link closure of branching nodes is too big}\label{app:naive}
{
\begin{figure}[h]
	\centering
	\begin{minipage}{\hsize}
		\centering
		\includegraphics[scale=0.14]{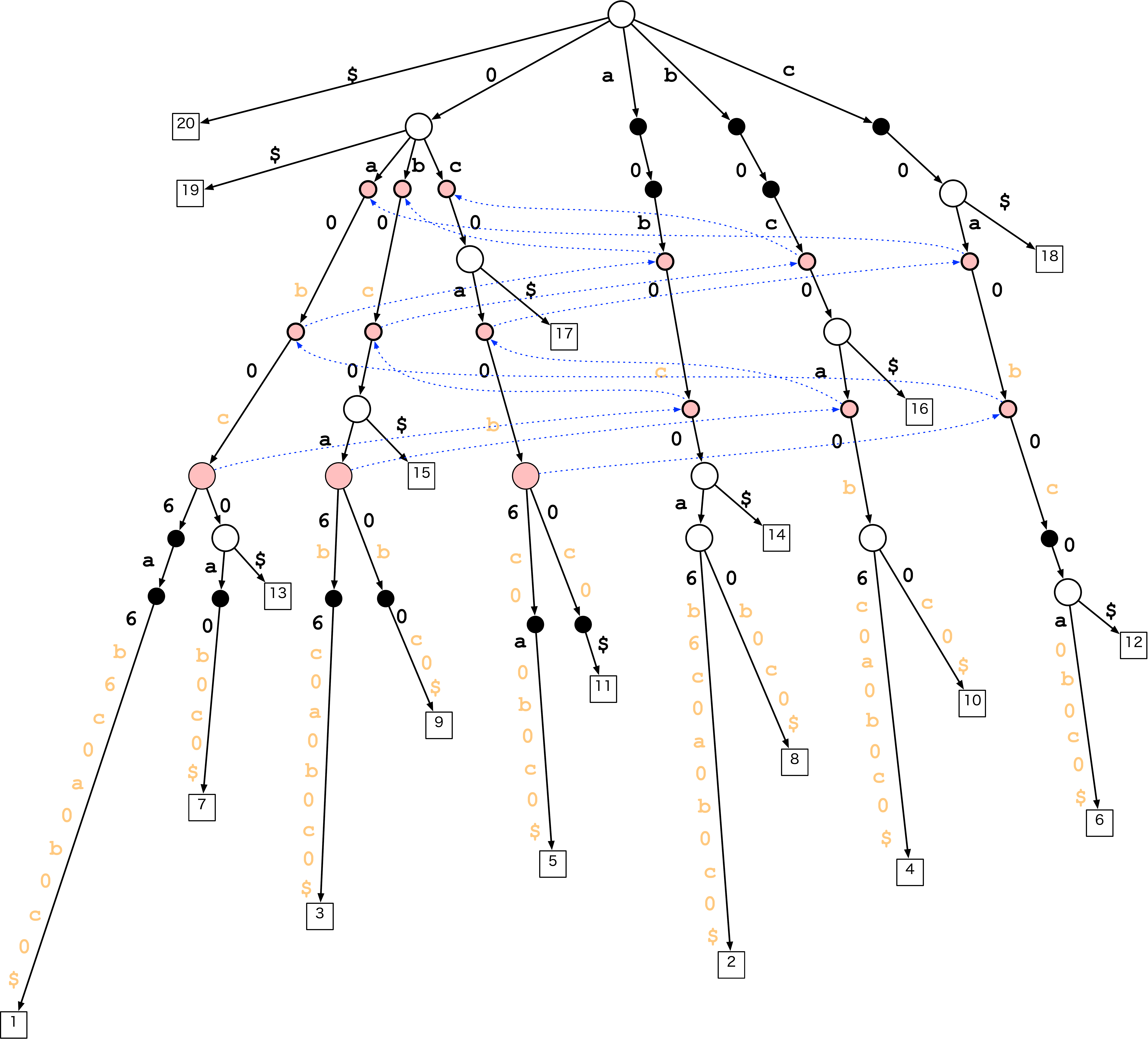}\\
	\end{minipage}
	\caption{
		An example demonstrating that the implicit suffix link closure $\{\, \Shrink^{i}(u) \in \PSTrie(T) \mid u \in V_1 \text{ and } i \le |u| \,\}$ of Type~1 nodes has too many elements,
		where $T={\tt taubvctaubvcwaxbycz\$}$ ($\pre{T} = {\tt \z a\z b\z c\s a\s b\s c\z a\z b\z c\z\$}$) with $\Sigma=\{ {\tt a, b, c,\$}\}$ and $\Pi=\{{\tt t, u, v, w, x, y, z}\}$. 
		Big and small red circles represent bad nodes in $V_1$ and newly added nodes not in $V_1 \cup V_2$, respectively.
	}
	\label{fig:counter_example_type3}
\end{figure}

We show that the total number of nodes of the form $\Shrink^j(u) \in \PSub(T)$ for some $u \in V_1$ cannot be linearly bounded by $|T|$.
Let us consider a text
\[
T_n = \tx_1 \ta_1 \dots \tx_n \ta_n \tx_1 \ta_1 \dots \tx_n \ta_n \ty_1 \ta_1 \dots \ty_n \ta_n \tz \$
\]
where $\tx_i, \ty_i, \tz \in \Pi$ and $\ta_i \in \Sigma$ for each $i$.
Note that $|T_n| \in O(n)$.
Here
\[
	w_i = 0 \ta_i 0 \ta_{i+1} \dots 0 \ta_n 0 \ta_1 \dots 0 \ta_{i-1} \in \PSub(T_n)
\]
 is a Type~1 node, since $ w_i 0 ,  w_i (2n) \in \PSub(T_n)$.
Then the set $\{\, \Shrink^j(w_i) \mid 1 \le i \le n,\,0 \le j < 2n\,\}$ has $2n^2$ elements.
Therefore, we cannot keep our indexing structure in linear size. 
Figure~\ref{fig:counter_example_type3} illustrates the case of $n=3$, where twelve additional nodes are created.
}

\subsection{Other experiments}\label{app:expe}
We performed comparative experiments on the numbers of nodes of PLSTs and p-suffix trees for texts in addition to random and Fibonacci strings.
The results of our experiments for Thue-Morse strings and Period-doubling strings are shown in Tables~\ref{tab:exp2}.
For Thue-Morse strings and Period-doubling strings, our data structure only have a limited number of additional nodes.
The Fibonacci strings, Thue-Morse strings and Period-doubling strings are defined as follows.

The $k$-th Fibonacci string $\mi{Fib}_k$ is defined by the following recurrence:
\begin{align*}
	\mi{Fib}_1 = {\tt b},\ \mi{Fib}_{2} = {\tt a},\
	\mi{Fib}_k = \mi{Fib}_{k-1} + \mi{Fib}_{k-2}  \text{ for $k > 2$}\,.
\end{align*}

The $k$-th Thue-Morse string can be obtained by applying the following homomorphism $\sigma$ to $\tt{a}$ $k$ times:
\begin{align*}
	\sigma(\tt{a}) = \tt{ab} \\
	\sigma(\tt{b}) = \tt{ba} 
\end{align*}

The $k$-th Period-doubling string can be obtained by applying the following homomorphism $\sigma$ to $\tt{a}$ $k$ times:
\begin{align*}
	\sigma(\tt{a}) = \tt{ab} \\
	\sigma(\tt{b}) = \tt{aa} 
\end{align*}

\begin{table}[t]
\begin{center}
\caption{The numbers of nodes of PLSTs for Thue-Morse and Period-doubling strings}
\scalebox{0.82}{
\label{tab:exp2}
  \begin{tabular}{|r||p{4em}|p{3.5em}||p{4em}|p{3.5em}|} \hline
  	&	\multicolumn{4}{c|}{Thue-Morse strings}
\\ \cline{2-5}
      & \multicolumn{2}{c||}{constant string} & \multicolumn{2}{|c|}{p-string} \\ \hline
     \multicolumn{1}{|c||}{\small{length $n$}} & \multicolumn{1}{|c|}{Type~1} & \multicolumn{1}{c||}{Type~2,3} & \multicolumn{1}{|c|}{Type~1} & \multicolumn{1}{c|}{Type~2,3} \\\hline  \multicolumn{5}{|c|}{}  \\[-10pt] \hline
   17 &\hfill 28 &\hfill 10 &\hfill 29 &\hfill 6\\
   33 &\hfill 56 &\hfill 14 &\hfill 57 &\hfill 8\\
   65 &\hfill 112 &\hfill 18 &\hfill 113 &\hfill 10\\ 
  129 &\hfill 224 &\hfill 22 &\hfill 225 &\hfill 12\\ 
  257 &\hfill 448 &\hfill 26 &\hfill 449 &\hfill 14\\ \hline
  513 &\hfill 896 &\hfill 30 &\hfill 897 &\hfill 16\\ 
 1025 &\hfill 1792 &\hfill 34 &\hfill 1793 &\hfill 18\\
 2049 &\hfill 3584 &\hfill 38 &\hfill 3585 &\hfill 20\\
 4097 &\hfill 7168 &\hfill 42 &\hfill 7169 &\hfill 22\\ 
 8193 &\hfill 14336 &\hfill 46 &\hfill 14337 &\hfill 24\\  \hline
    \end{tabular}
    }
  \quad
\scalebox{0.82}{
  \begin{tabular}{|r||p{3.5em}|p{3.5em}||p{3.5em}|p{3.5em}|} \hline
  	&	\multicolumn{4}{c|}{Period-doubling strings}
\\ \cline{2-5}
      & \multicolumn{2}{c||}{constant string} & \multicolumn{2}{|c|}{p-string} \\ \hline
     \multicolumn{1}{|c||}{length $n$} & \multicolumn{1}{|c|}{Type~1} & \multicolumn{1}{c||}{Type~2,3} & \multicolumn{1}{|c|}{Type~1} & \multicolumn{1}{c|}{Type~2,3} \\
	\hline  \multicolumn{5}{|c|}{}  \\[-10pt] \hline
   17 &\hfill 30 &\hfill 7 &\hfill 31 &\hfill 9\\
   33 &\hfill 64 &\hfill 11 &\hfill 61 &\hfill 11\\ 
   65 &\hfill 126 &\hfill 13 &\hfill 127 &\hfill 15\\ 
  129 &\hfill 256 &\hfill 17 &\hfill 253 &\hfill 17\\
  257 &\hfill 510 &\hfill 19 &\hfill 511 &\hfill 21\\ \hline 
  513 &\hfill 1024 &\hfill 23 &\hfill 1021 &\hfill 23\\
 1025 &\hfill 2046 &\hfill 25 &\hfill 2047 &\hfill 27\\
 2049 &\hfill 4096 &\hfill 29 &\hfill 4093 &\hfill 29\\
 4097 &\hfill 8190 &\hfill 31 &\hfill 8191 &\hfill 33\\ 
 8193 &\hfill 16384 &\hfill 35 &\hfill 16381 &\hfill 35\\ \hline
    \end{tabular}
    }
  \end{center}
\end{table}

\end{document}